\title{Relations Between Greedy and Bit-Optimal LZ77~Encodings}
\author{Dmitry Kosolobov}
\affil{University of Helsinki, Helsinki, Finland\\ \texttt{dkosolobov@mail.ru}}
\subjclass{E.4 Coding and information theory}
\authorrunning{D. Kosolobov}
\keywords{Lempel--Ziv, LZ77 encoding, greedy LZ77, bit optimal LZ77}
\begin{document}
\maketitle

\begin{abstract}
This paper investigates the size in bits of the LZ77 encoding, which is the most popular and efficient variant of the Lempel--Ziv encodings used in data compression. We prove that, for a wide natural class of variable-length encoders for LZ77 phrases, the size of the greedily constructed LZ77 encoding on constant alphabets is within a factor $O(\frac{\log n}{\log\log\log n})$ of the optimal LZ77 encoding, where $n$ is the length of the processed string. We describe a series of examples showing that, surprisingly, this bound is tight, thus improving both the previously known upper and lower bounds. Further, we obtain a more detailed bound $O(\min\{z, \frac{\log n}{\log\log z}\})$, which uses the number $z$ of phrases in the greedy LZ77 encoding as a parameter, and construct a series of examples showing that this bound is tight even for binary alphabet. We then investigate the problem on non-constant alphabets: we show that the known $O(\log n)$ bound is tight even for alphabets of logarithmic size, and provide tight bounds for some other important cases.
\end{abstract}

\section{Introduction}

The Lempel--Ziv encoding~\cite{LZ77} (LZ77 for short) is one of the most popular and efficient compression techniques used in data compression, stringology, and algorithms in general. The LZ77 encoding lies at the heart of common compressors such as {\tt gzip}, {\tt 7zip}, {\tt pkzip}, {\tt rar}, etc. and serves as a basis for modern compressed text indexes on highly repetitive data (e.g., see~\cite{GGKNP2,KreftNavarroTCS,MakinenNavarro}).

Numerous papers on LZ77 have been published during the last 40 years. In these works, it was proved that LZ77 is superior compared to many other compression schemes both in practice and in theory. For instance, in~\cite{KosarajuManzini,LZ78,WynerZiv} it was shown that LZ77 is asymptotically optimal with respect to different entropy-related measures; further, in~\cite{CharikarEtAl} it was proved that many other reference based encoders (including LZ78~\cite{LZ78}) use polynomially (in the length of the uncompressed data) more space than LZ77 in the worst case and, in a sense, are never significantly better than LZ77. However, many problems related to LZ77 are still not completely solved. In this paper we investigate how good is the popular greedy LZ77 encoder in a class of practically motivated models with variable-length encoders for LZ77 phrases; to formulate the problem that we study more accurately, let us first discuss what is known about different LZ77 encoders.

LZ77 is a dictionary based compression scheme that replaces a string with phrases that are actually references to strings in a dictionary. Each phrase of an LZ77 encoding can be viewed as a triple $\langle d,\ell,c\rangle$, where $\ell$ is the length of the phrase, $d$ is the distance to a string of length $\ell{-}1$ from the dictionary such that this string is a prefix of the phrase, and $c$ is the last letter of the phrase (the precise definition follows); we use the definition from~\cite{LZ77} but all our results can be adapted for the version of LZ77 from~\cite{StorerSzymanski}, in which phrases are encoded by pairs $\langle d,\ell\rangle$ (throughout the paper, we provide the reader with separate remarks in cases where such adaptation is not straightforward). The same string can have many different LZ77 encodings. It is well known that
the greedily constructed LZ77 encoding, which builds the encoding from left to right making each phrase as long as possible during this process, is optimal in the sense that it produces the minimal number of phrases among all LZ77 encodings of this string
(see~\cite{CharikarEtAl,Rytter03,StorerSzymanski}). The same optimality property holds for the versions of LZ77 with ``sliding window''~\cite{CrochemoreLangiuMignosi}, which is a restriction that is important for practical applications.

However, in practice, compressors usually use variable-length encoders for phrases and, in this case, it is not clear whether the greedy LZ77 encoder is optimal in the sense that it outputs the minimal number of bits. The question of finding an optimal LZ77 encoding for variable-length phrase encoders was raised in~\cite{RajpootSahinalp} and the first attempts to solve this problem were given in~\cite{FerraginaNittoVenturini}. The authors of~\cite{FerraginaNittoVenturini} also conducted the first theoretical studies to find how bad is the greedy LZ77 encoding compared to an optimal LZ77 encoding. Such questions make sense only if we state formally which kinds of phrase encoders are used in the LZ77 encoder. As in~\cite{FerraginaNittoVenturini}, we investigate encoders that encode
each phrase $\langle d,\ell,c\rangle$ using $\Theta(\log d + \log \ell + \log c)$ bits\footnote{Throughout the paper all logarithms have base $2$ if it is not explicitly stated otherwise.} (see a more formal discussion below). This class of phrase encoders includes a broad range of practically used encoders and, among others, Elias's~\cite{Elias} and Levenshtein's~\cite{Levenshtein} encoders, which produce asymptotically optimal universal codes for the numbers $d, \ell, c$; we refer the reader to~\cite{FerraginaNittoVenturini} for further discussions on the motivation.

In the described model, there are two ways how to optimize the size of the produced LZ77 encoding. The first way is to minimize $d$ in the triples $\langle d,\ell,c\rangle$. This problem was addressed already in~\cite{FerraginaNittoVenturini} for the greedy LZ77 encoder, where one must find the rightmost occurrence of the referenced part of each phrase; several improvements on this result of~\cite{FerraginaNittoVenturini} and related questions were given in~\cite{AmirLandauUkkonen,BelazzouguiPuglisi,BilleCordingFischerGortz,CrochemoreLangiuMignosi2,Larsson}. The second way is to consider both parameters $\ell$ and $d$, i.e., to build an optimal LZ77 encoding. There are very few works in this direction (see~\cite{CrochemoreEtAl2} and~\cite{FerraginaNittoVenturini}) and there is still a room for improvements in such results. Due to the overall difficulty of the problem of finding an optimal LZ77 encoding, real compressors usually construct an LZ77 encoding greedily. Thus, this raises the following question: how bad can the produced greedy LZ77 encoding be compared to an optimal LZ77 encoding?

For a given string of length $n$, denote by $\mathsf{LZ_{gr}}$ and $\mathsf{LZ_{opt}}$ the sizes in bits of, respectively, the greedily constructed and an optimal LZ77 encodings from the special class of encodings that we consider in this paper (see clarifications in Section~\ref{SectPreliminaries}). We investigate the ratio $\frac{\mathsf{LZ_{gr}}}{\mathsf{LZ_{opt}}}$. Upper bounds on this ratio are provided in terms of the parameters $n$, $z$, and $\sigma$, where $z$ is the number of phrases in the greedy LZ77 encoding of the considered string (it is well known that any other LZ77 encoding contains at least $z$ phrases; see~\cite{CharikarEtAl,Rytter03,StorerSzymanski}) and $\sigma$ is the alphabet size. We are also interested in upper bounds that use only the parameter $n$. In~\cite{FerraginaNittoVenturini} it was proved that $\frac{\mathsf{LZ_{gr}}}{\mathsf{LZ_{opt}}} = O(\log n)$ and there is a series of examples on which $\frac{\mathsf{LZ_{gr}}}{\mathsf{LZ_{opt}}} = \Omega(\frac{\log n}{\log\log n})$. In this paper we improve these results and our bounds in many cases are tight in the sense that there are series of examples on which these bound are attained; our main contributions are summarized in Table~\ref{tbl:results}.

\begin{table}[ht]
\caption{Upper bounds on $\mathsf{LZ_{gr}}/ \mathsf{LZ_{opt}}$; tight bounds are denoted by $\Theta$.}
\begin{center}
\begin{tabular}{r|c|c}\hline
~      & parameter $n$ & parameters $n, z, \sigma$ \\\midrule
$\sigma = O(1)$ & $\Theta(\frac{\log n}{\log\log\log n})$ & $\Theta(\min\{z, \frac{\log n}{\log\log z}\})$   \\
arbitrary $\sigma$    & $\Theta(\log n)$                        & $O(\min\{z, \frac{\log n}{\log\log_\sigma z}\})$ \\\hline      
\end{tabular}
\label{tbl:results}
\end{center}
\end{table}

First, we study the case of constant alphabets and completely solve it. Namely, in Theorem~\ref{MainTheorem}, we find the following detailed upper bound on the ratio $\frac{\mathsf{LZ_{gr}}}{\mathsf{LZ_{opt}}}$ (note that this bound is also applicable for arbitrary alphabets): $\frac{\mathsf{LZ_{gr}}}{\mathsf{LZ_{opt}}} = O(\min\{z, \frac{\log n}{\log\log_\sigma z}\}).$
In the case of constant alphabets this upper bound degenerates to $O(\min\{z, \frac{\log n}{\log\log z}\})$. In Theorem~\ref{ExampleTheorem} we construct a series of examples on the binary alphabet showing that this simplified bound is tight, thus closing the problem for constant alphabets. Theorem~\ref{ExampleTheorem} actually provides a more elaborate lower bound $\Omega(\min\{z, \frac{\log n}{\log\log_\sigma z + \log\sigma}\})$, which is applicable for arbitrary alphabets. From these general results, we deduce in Corollary~\ref{OnlyNbound} that $\frac{\mathsf{LZ_{gr}}}{\mathsf{LZ_{opt}}} = O(\frac{\log n}{\log\log\log n})$ for constant alphabets, and this upper bound is tight.

Then, we consider the case of arbitrary alphabets. It is shown in Theorem~\ref{ExampleTheorem2} that the upper bound $O(\log n)$ on the ratio $\frac{\mathsf{LZ_{gr}}}{\mathsf{LZ_{opt}}}$ is tight even if the input alphabet has logarithmic size. Thus, we solve the problem in the general case and find that the tight upper bounds, expressed in terms of $n$, for constant and arbitrary alphabets differ by $\Theta(\log\log\log n)$ factor.

As a side note, for polylogarithmic alphabets and $z \ge 2^{\log^\epsilon n}$, where $\epsilon > 0$ is an arbitrary constant, we obtain in Corollary~\ref{MainIsTight2} the upper bound $O(\frac{\log n}{\log\log n})$ and show that this bound is tight for such alphabets and such $z$. Informally, the strings for which the condition $z \ge 2^{\log^\epsilon n}$ holds (which includes the case $z \ge n^\delta$, where $\delta > 0$ is an arbitrary constant) can be called ``non-extremely compressible'' strings. Thus, we, in a sense, solve the problem in the arguably most important case of ``non-extremely compressible'' strings drawn from polylogarithmic alphabets.

The paper is organized as follows. In the following Section~\ref{SectPreliminaries} we introduce some basic notions used throughout the text and, in particular, formally define LZ77 parsings and encodings. Section~\ref{SectUpperBound} describes a detailed upper bound on the ratio of the sizes in bits of the greedy and optimal LZ77 encodings. In Section~\ref{SectLowerBound} it is shown that, on constant alphabets, this bound is tight. The material of these two sections provides a complete solution of the problem for constant alphabets, which turns out to be quite simple. We then consider arbitrary alphabets in Section~\ref{SectArbitraryAlphabet} and find tight bounds for several important cases, including the general case of arbitrary alphabet and arbitrary $z$, for which, as it turns out, the known $O(\log n)$ bound is tight. Finally, we conclude with some remarks and open problems in Section~\ref{SectConclusion}.

\section{Preliminaries}\label{SectPreliminaries}

A \emph{string $s$} over an alphabet $\Sigma$ is a map $\{1,2,\ldots,n\} \to \Sigma$, where $n$ is referred to as the \emph{length of $s$}, denoted by $|s|$. In this paper we assume that the alphabet is a set of non-negative integers that are less than or equal to $n$, which is a common and natural assumption in the problem under investigation. We write $s[i]$ for the $i$th letter of $s$ and $s[i..j]$ for $s[i]s[i{+}1]\cdots s[j]$. A string $u$ is a \emph{substring} of $s$ if $u = s[i..j]$ for some $i$ and $j$; the pair $(i,j)$ is not necessarily unique and we say that $i$ specifies an \emph{occurrence} of $u$ in $s$ starting at position $i$. A substring $s[1..j]$ (resp., $s[i..n]$) is a \emph{prefix} (resp. \emph{suffix}) of $s$. We say that substrings $s[i..j]$ and $s[i'..j']$ \emph{overlap} if $j \ge i'$ and $i \le j'$. For any $i,j$, the set $\{k\in \mathbb{Z} \colon i \le k \le j\}$ (possibly empty) is denoted by $[i..j]$.


An \emph{LZ77 parsing} of a given string $s$ is a parsing $s = f_1f_2\cdots f_z$ such that all the strings $f_1, \ldots, f_z$ (called \emph{phrases}) are non-empty and, for any $i \in [1..z]$, either $f_i$ is a letter, or $|f_i| > 1$ and the string $f_i[1..|f_i|{-}1]$ has an earlier occurrence starting at some position $j \le |f_1f_2\cdots f_{i-1}|$ (note that this occurrence can overlap $f_i$).

The \emph{greedy LZ77 parsing} is a special LZ77 parsing built by the greedy procedure that constructs all phrases from left to right by choosing each phrase $f_i$ as the longest substring starting at given position such that $f_i[1..|f_i|{-}1]$ has an earlier occurrence in the string (see~\cite{LZ77}). For instance, the greedy LZ77 parsing of the string $s = abababbbaba$ is $a.b.ababb.baba$. The following lemma is straightforward.

\begin{lemma}
All phrases in the greedy LZ77 parsing of a given string (except, possibly, for the last phrase) are distinct.
\label{DistinctPhrases}
\end{lemma}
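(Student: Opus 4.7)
The plan is to prove the lemma by contradiction, exploiting the maximality property of the greedy choice. Suppose that two phrases $f_i$ and $f_j$ with $i < j < z$ (so that $f_j$ is not the last phrase) are equal; I will show this contradicts the greediness of the parsing.

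First, I record the relevant positions: let $p_i = |f_1 \cdots f_{i-1}| + 1$ and $p_j = |f_1 \cdots f_{j-1}| + 1$ be the starting positions of $f_i$ and $f_j$, so that $p_i < p_j$ and, setting $\ell = |f_i| = |f_j|$, we have $s[p_i..p_i{+}\ell{-}1] = s[p_j..p_j{+}\ell{-}1]$. Since $j < z$, the position $p_j + \ell$ still lies in $s$, so there is a well-defined next character $c = s[p_j{+}\ell]$. Consider the candidate phrase $f_j' = f_j \cdot c$ of length $\ell + 1 \ge 2$. Its prefix of length $\ell$ equals $f_j = f_i$, and $f_i$ occurs at position $p_i \le p_j - 1 = |f_1\cdots f_{j-1}|$, so the LZ77 condition from the definition is satisfied by $f_j'$.

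Thus $f_j'$ is a valid choice of phrase starting at position $p_j$, and it is strictly longer than $f_j$. This contradicts the definition of the greedy LZ77 parsing, which selects $f_j$ as the longest substring starting at $p_j$ whose prefix of length $|f_j| - 1$ has an earlier occurrence. Hence no two phrases except possibly the last can coincide.

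The argument is entirely syntactic and presents no real obstacle; the only point that requires mild care is ensuring that $f_j$ is not the final phrase, so that the extension character $c = s[p_j + \ell]$ exists, which is exactly the hypothesis of the lemma.
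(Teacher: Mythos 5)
Your proof is correct and is exactly the standard argument the paper has in mind when it calls the lemma ``straightforward'' (the paper omits the proof): if $f_i=f_j$ with $j<z$, the occurrence of $f_i$ at $p_i\le|f_1\cdots f_{j-1}|$ witnesses that $f_j$ extended by one letter is still a valid phrase, contradicting greediness. You also correctly isolate the one point of care, namely that $j<z$ guarantees the extension letter exists.
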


It is also well-known that, for a given string, the greedy LZ77 parsing has the minimal number of phrases among all LZ77 parsings (e.g., see~\cite{CharikarEtAl,Rytter03,StorerSzymanski}). This implies that, when each phrase of the parsing is encoded by a fixed number of bits, the greedy LZ77 parsing is optimal, i.e., it produces an encoding of the minimal size in bit. However, the greedy LZ77 parsing does not necessarily produce an encoding of the minimal size when one uses a variable-length encoder for phrases; the latter is usually the case in most common compressors. Let us clarify what kinds of variable-length phrase encoders we are to consider in this paper.

A given LZ77 parsing $f_1f_2\cdots f_z$ is encoded as follows. Each phrase $f_i$ is represented by a triple $\langle d, \ell, c\rangle$, where $\ell = |f_i|$, $c = f_i[|f_i|]$, and $d = |f_1f_2\cdots f_{i-1}| - j$ for $j$ that is the position of an earlier occurrence of $f_i[1..|f_i|{-}1]$ (assuming that $d = 0$ if $|f_i| = 1$). We choose three encoders $e_{d}, e_{\ell}, e_c$, each of which maps non-negative integers to bit strings. We then transform each triple $\langle d,\ell,c\rangle$ into the binary string $e_{d}(d)e_{\ell}(\ell)e_{c}(c)$ and concatenate all these binary strings, thus producing an \emph{LZ77 encoding} corresponding to the given LZ77 parsing.

In this paper we consider only encoders $e_{d}, e_{\ell}, e_c$ that map any positive integer $x$ to a bit string of length $\Theta(\log(x+1))$. This family of encoders includes most widely used encoders such as Elias's~\cite{Elias} and Levenshtein's~\cite{Levenshtein} ones (see~\cite{FerraginaNittoVenturini} for further motivation). We fix three encoders $e_{d}, e_{\ell}, e_c$ satisfying the above property and, hereafter, assume that all considered LZ77 encodings are obtained using these $e_{d}, e_{\ell}, e_c$.

We say that an LZ77 encoding is \emph{optimal} if it has the minimal size in bits. It is shown below that, unlike the case of fixed-length phrase encoders, for the family of phrase encoders under investigation, the LZ77 encoding generated by the greedy LZ77 parsing (which is called the \emph{greedy LZ77 encoding}) is not necessarily optimal. Among all possible greedy LZ77 encodings we always consider those that occupy the minimal number of bits; usually, such encoding is obtained by the minimization of the numbers $d$ in the triples $\langle d,\ell,c\rangle$ representing the phrases of the greedy LZ77 parsing.

\begin{remark}
Most common compressors actually use a different variant of the LZ77 parsing (which was introduced in~\cite{StorerSzymanski}), defining each phrase $f_i$ as either a letter or a string that has an earlier occurrence (note that in the definition of LZ77 parsings only the prefix $f_i[1..|f_i|{-}1]$ of $f_i$ must have an earlier occurrence). We call this variant a \emph{nonclassical LZ77 parsing} (as it differs from the original parsing proposed in~\cite{LZ77}). The \emph{greedy nonclassical LZ77 parsing} is defined by analogy with the greedy LZ77 parsing. In encoding corresponding to a nonclassical LZ77 parsing each phrase is represented either by a pair $\langle d,\ell\rangle$ that is defined analogously to the triples $\langle d,\ell,c\rangle$, or by one letter. This variant of LZ77 is very similar to the one that we investigate and, moreover, all our results can be adapted for this variant. In the sequel, we provide separate remarks that explicitly show how to generalize our results to nonclassical LZ77 parsings if it is not straightforward.
\end{remark}

\section{Upper Bound}\label{SectUpperBound}

Our proof of the upper bound on the ratio between the sizes of the greedy and optimal LZ77 encodings is as follows: first, we obtain an upper bound $U$ on the size of the greedy LZ77 encoding, then we find a lower bound $L$ on the size of any LZ77 encoding, and finally, we derive the estimation $\frac{U}{L}$ on the ratio. The details follow.

Let $s$ be a string of length $n$. Recall that any letter of $s$ is an integer from the range $[0..n]$. Based on the above mentioned properties of the phrase encoders $e_{d}, e_{\ell}, e_c$, one can easily show that each phrase of any LZ77 encoding of $s$ occupies $O(\log n)$ bits. Therefore, we obtain the following upper bound on the size of the greedy LZ77 encoding.

\begin{lemma}
Let $\mathsf{LZ_{gr}}$ be the size in bits of the greedy LZ77 encoding of a given string of length $n$. Then, we have $\mathsf{LZ_{gr}} = O(z\log n)$, where $z$ is the number of phrases in the encoding.\label{GreedyLZ77upper}
\end{lemma}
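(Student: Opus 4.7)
The plan is to derive $\mathsf{LZ_{gr}} = O(z\log n)$ by a direct per-phrase accounting: bound the bit-length contributed by each triple $\langle d,\ell,c\rangle$ by $O(\log n)$, then sum over the $z$ phrases produced by the greedy parsing.

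First, I would invoke the setup of Section~\ref{SectPreliminaries}: the greedy LZ77 encoding represents each phrase $f_i$ by a triple $\langle d,\ell,c\rangle$ with $\ell = |f_i|$, $c = f_i[|f_i|]$, and $d$ a non-negative integer smaller than the position at which $f_i$ starts. Since $|s|=n$, this gives $d \le n$ and $\ell \le n$, and the paper's standing assumption that letters are integers in $[0..n]$ gives $0 \le c \le n$ as well. The case of a one-letter phrase, where $d = 0$ by convention, fits this bound trivially.

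Next, using the fixed property of the chosen encoders, namely that $e_d, e_\ell, e_c$ each map a non-negative integer $x$ to a bit string of length $\Theta(\log(x+1))$, each of the three fields of a triple contributes $O(\log(n+1)) = O(\log n)$ bits, so the whole triple occupies $O(\log n)$ bits. Summing over the $z$ phrases of the greedy parsing yields the claimed $\mathsf{LZ_{gr}} = O(z\log n)$.

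There is no real obstacle here; the statement is essentially a worst-case per-phrase bound, and the argument works verbatim for the nonclassical variant of Remark~\ref{...} as well, since dropping the $c$ field only decreases the per-phrase cost. The bound is loose in general (it is exactly the bound that Section~\ref{SectUpperBound} later sharpens by comparing to an optimal encoding), but it is all that is needed to launch the two-sided sandwich $\tfrac{U}{L}$ argument described at the start of Section~\ref{SectUpperBound}.
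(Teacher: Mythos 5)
Your proof is correct and matches the paper's own (brief) argument: each component of a triple $\langle d,\ell,c\rangle$ is a non-negative integer at most $n$, so the fixed encoders spend $O(\log n)$ bits per phrase, and summing over the $z$ phrases gives the bound. Nothing further is needed.
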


The lower bound on any LZ77 encoding is more complicated. Lemmas~\ref{TechLemma},~\ref{LZ77intersect},~\ref{LongPhrasesSet} below are well known but we, nevertheless, provide their proofs for the sake of completeness.

\begin{lemma}
For any positive integers $t, t_1, \ldots, t_k$ such that $\sum_{i=1}^k t_i \ge t$, we have $\sum_{i=1}^k\log t_i \ge \log(t-k+1)$.
\label{TechLemma}
\end{lemma}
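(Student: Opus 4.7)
The plan is to reduce the logarithmic inequality to a multiplicative one and then exploit a simple expansion argument. Exponentiating both sides, it suffices to show $\prod_{i=1}^{k} t_i \ge t - k + 1$, and since $t \le \sum_i t_i$ by hypothesis, it actually suffices to prove the sharper inequality
\[
\prod_{i=1}^{k} t_i \;\ge\; 1 + \sum_{i=1}^{k}(t_i - 1).
\]
This is the cleanest target because it no longer mentions $t$.

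To establish this, I would substitute $x_i = t_i - 1 \ge 0$ (using the positivity of the $t_i$) and rewrite the claim as $\prod_{i=1}^{k}(1+x_i) \ge 1 + \sum_{i=1}^{k} x_i$. This is immediate by expanding the product: every term in the expansion is a product of the form $\prod_{j \in S} x_j$ with $S \subseteq [1..k]$, all such terms are non-negative, and the right-hand side picks up only the terms with $|S| \le 1$. Alternatively, one can run a one-line induction on $k$, using the identity $(1+x_1)\cdots(1+x_{k+1}) - (1 + \sum_{i=1}^{k+1} x_i) = x_{k+1}\bigl(\prod_{i=1}^{k}(1+x_i) - 1\bigr) + \bigl(\prod_{i=1}^{k}(1+x_i) - 1 - \sum_{i=1}^{k} x_i\bigr)$, both summands being non-negative by the inductive hypothesis.

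Having $\prod_{i=1}^{k} t_i \ge 1 + \sum_{i=1}^{k}(t_i - 1) = \sum_{i=1}^{k} t_i - k + 1 \ge t - k + 1$, I then take $\log_2$ of both sides; this is legitimate because $t - k + 1$ is either $\le 0$ (in which case the conclusion is vacuous, $\log$ being $-\infty$ or the claim being trivially true when interpreted as a non-negative lower bound) or $\ge 1$. The resulting $\sum_{i=1}^{k}\log t_i \ge \log(t-k+1)$ is exactly what we want.

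There is no real obstacle here; the only subtlety is remembering that the lemma must be invoked later with $t_i$ that are integers $\ge 1$, so that the substitution $x_i = t_i - 1 \ge 0$ is valid and the $\log t_i$ terms are non-negative. The entire proof is essentially one application of the super-additivity of the expansion $\prod(1+x_i)$.
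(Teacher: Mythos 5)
Your proof is correct, but it takes a genuinely different route from the paper's. The paper argues by an exchange (smoothing) step: since $(t_j+1)(t_{j'}-1) < t_j t_{j'}$ whenever $t_j \ge t_{j'}$, the product $\prod_i t_i$ subject to $\sum_i t_i \ge t$ with positive integers is minimized at the extremal configuration $t_1 = t-k+1$, $t_2=\cdots=t_k=1$, from which the bound is read off directly. You instead bypass the identification of the minimizer entirely and prove the Weierstrass-type inequality $\prod_i(1+x_i) \ge 1+\sum_i x_i$ for $x_i = t_i - 1 \ge 0$, either by expanding the product or by induction. Your argument is slightly more general (it needs only $t_i \ge 1$ real, not integrality, whereas the paper's exchange step moves unit mass between integer values) and arguably cleaner; the paper's argument has the minor advantage of exhibiting the exact extremal point, showing the bound $\log(t-k+1)$ is attained. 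Your handling of the degenerate case $t - k + 1 \le 0$ is a detail the paper leaves implicit, and your final chain $\prod_i t_i \ge \sum_i t_i - k + 1 \ge t-k+1$ is sound.
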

\begin{proof}
Note that $\sum_{i=1}^k\log t_i = \log\prod_{i=1}^k t_i$. Since for any $t_j$ and $t_{j'}$ such that $t_j \ge t_{j'}$, we have $(t_j + 1)(t_{j'} - 1) = t_jt_{j'} - (t_j - t_{j'} + 1) < t_jt_{j'}$, the product $\prod_{i=1}^k t_i$ is minimized when $t_1 = t - k + 1$ and $t_2 = t_3 = \cdots = t_{k} = 1$ (recall that every number $t_i$ must be a positive integer). Therefore, we obtain $\sum_{i=1}^k\log t_i \ge \log(t-k+1)$.
\end{proof}


\begin{lemma}
Any phrase of an LZ77 parsing of a string can overlap with at most two phrases of the greedy LZ77 parsing of the same string.
\label{LZ77intersect}
\end{lemma}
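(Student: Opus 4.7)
The plan is to argue by contradiction using the greedy maximality. Fix the greedy parsing $g_1g_2\cdots g_m$ with $g_j$ occupying positions $[p_j..q_j]$, and let $f$ be a phrase of an arbitrary LZ77 parsing occupying positions $[a..b]$. The greedy phrases form a partition of $s$, so the indices $j$ with $g_j$ overlapping $f$ form a contiguous interval, say $[j_1..j_2]$. I want to show $j_2 - j_1 \le 1$.

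Suppose for contradiction that $j_2 \ge j_1 + 2$. Then the middle greedy phrase $g_{j_1+1}$ lies strictly inside $[a..b]$, i.e., $a \le p_{j_1+1}$ and $q_{j_1+1} \le b$. In fact $p_{j_1+1} = q_{j_1} + 1 \ge a + 1$ since $a \le q_{j_1}$ (because $g_{j_1}$ overlaps $f$ and is the first such greedy phrase). Write $p = p_{j_1+1}$; so $a < p \le b$.

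The key step is to exhibit an earlier occurrence of $s[p..b-1]$ starting before position $p$. Since $f = s[a..b]$ is a phrase of an LZ77 parsing, its prefix $s[a..b-1]$ occurs at some position $j' < a$. Restricting this occurrence to the suffix starting at offset $p - a$ yields an occurrence of $s[p..b-1]$ starting at position $j' + (p - a)$, which is $< a + (p - a) = p$, as required (one checks $j' + (p - a) \ge 1$ from $j' \ge 1$ and $p \ge a$). Thus the string $s[p..b]$ satisfies the greedy-extension condition at position $p$: its prefix of length one less has an earlier occurrence. By the maximality of the greedy choice, $|g_{j_1+1}| \ge b - p + 1$, so $q_{j_1+1} \ge b$. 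Combined with $q_{j_1+1} \le b$ from the containment assumption, this forces $q_{j_1+1} = b$. But then $p_{j_1+2} = b + 1 > b$, so $g_{j_1+2}$ does not overlap $f$, contradicting $j_2 \ge j_1 + 2$.

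I expect the only delicate point to be verifying that the witnessed earlier occurrence of $s[p..b-1]$ actually starts before $p$ (rather than merely before $a$); this reduces to the simple inequality $j' < a$, so the argument should be clean. Note also that the same argument applies to the nonclassical variant, since there the whole phrase $s[a..b]$ (not just its prefix) has an earlier occurrence, giving an even stronger witness.
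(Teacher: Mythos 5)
Your proof is correct and takes essentially the same route as the paper's: the middle greedy phrase sits properly inside $f[1..|f|{-}1]$, whose earlier occurrence (shifted by the appropriate offset) witnesses that the greedy choice at that position could have been extended to the end of $f$, a contradiction. The paper states this more tersely, but the argument is the same; your version just makes the index bookkeeping explicit.
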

\begin{proof}
Suppose, for the sake of contradiction, that a phrase $f$ of an LZ77 parsing overlaps with at least three phrases of the greedy LZ77 parsing. Then, $f[1..|f|{-}1]$ must contain a phrase $f'$ of the greedy LZ77 parsing as a proper substring. But then the string $f'$ occurs in an earlier occurrence of the string $f[1..|f|{-}1]$ and, therefore, the greedy construction procedure could choose a longer phrase during the construction of the phrase $f'$, which is a contradiction.
\end{proof}

\begin{lemma}
In the greedy LZ77 parsing of any string of length $n$ over an alphabet of size $\sigma \ge 2$, at least $z - 2\sqrt{z}$ phrases have length ${\ge}\frac{1}{2}\log_\sigma z$, where $z$ is the number of phrases.\label{LongPhrasesSet}
\end{lemma}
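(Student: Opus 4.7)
The plan is to combine Lemma~\ref{DistinctPhrases} with an elementary counting of short strings over $\Sigma$. By Lemma~\ref{DistinctPhrases}, the $z$ phrases of the greedy LZ77 parsing are pairwise distinct except possibly for the last one. Hence, for any threshold $L$, the number of phrases whose length is strictly less than $L$ is bounded above by the number of distinct non-empty strings of length less than $L$ over $\Sigma$, plus at most one (to account for the last phrase, which might coincide with an earlier one).

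I would then set $L = \tfrac{1}{2}\log_\sigma z$ and estimate $\sum_{k=1}^{m}\sigma^k$, where $m$ is the largest integer strictly smaller than $L$. Using $m \le L$, this sum equals $\frac{\sigma^{m+1} - \sigma}{\sigma - 1} \le \frac{\sigma\sqrt{z} - \sigma}{\sigma - 1}$. Since $\sigma \ge 2$ gives $\sigma/(\sigma - 1) \le 2$ (with equality only in the binary case), a short check shows this quantity is at most $2\sqrt{z} - 1$. Adding the $+1$ for the possibly non-distinct last phrase yields at most $2\sqrt{z}$ phrases of length strictly less than $\tfrac{1}{2}\log_\sigma z$, so at least $z - 2\sqrt{z}$ phrases have length at least $\tfrac{1}{2}\log_\sigma z$, which is precisely the stated claim.

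There is no serious obstacle here: the only delicate point is ensuring the constant in front of $\sqrt{z}$ comes out to exactly $2$, and not something slightly larger. The tightest case is $\sigma = 2$, where $\sigma/(\sigma - 1) = 2$ attains its maximum; the fact that the empty string is excluded from the count of non-empty strings saves exactly enough to absorb the extra $+1$ contributed by the possibly non-distinct last phrase. Apart from this bookkeeping, the whole argument is a one-step application of Lemma~\ref{DistinctPhrases} combined with the geometric sum for $\sum_{k=1}^m \sigma^k$.
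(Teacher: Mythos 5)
Your proposal is correct and follows essentially the same route as the paper: invoke Lemma~\ref{DistinctPhrases}, bound the number of distinct short strings by a geometric sum, and observe that the worst constant $\sigma/(\sigma-1)\le 2$ occurs at $\sigma=2$. The only cosmetic difference is that the paper sums from length $0$ and absorbs the slack at the end, whereas you sum from length $1$ and add $1$ for the possibly repeated last phrase; both yield the stated $2\sqrt{z}$.
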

\begin{proof}
Denote by $f_1f_2\cdots f_z$ the greedy LZ77 parsing of a given string of length $n$ over an alphabet of size $\sigma$. By Lemma~\ref{DistinctPhrases}, all the phrases $f_1, \ldots, f_{z-1}$ are distinct. Therefore, for any $\ell > 0$, at most $\sum_{i=0}^{\ell} \sigma^i = \frac{\sigma^{\ell+1} - 1}{\sigma - 1}$ of these phrases have length at most $\ell$. Since for any $\ell < \frac{1}2\log_\sigma z$, we have $\sum_{i=0}^{\ell}\sigma^i < \frac{\sqrt{z}\sigma - 1}{\sigma - 1}$, the number of phrases with length at least $\frac{1}2 \log_\sigma z$ must be greater than $(z - 1) - \frac{\sqrt{z}\sigma - 1}{\sigma - 1}$. Thus, it remains to prove that $1 + \frac{\sqrt{z}\sigma - 1}{\sigma - 1} \le 2\sqrt{z}$. It is easy to show that, for $\sigma \ge 2$, the function $\frac{\sqrt{z}\sigma - 1}{\sigma - 1}$ decreases as $\sigma$ grows. Hence, we deduce $1 + \frac{\sqrt{z}\sigma - 1}{\sigma - 1} \le 1 + \frac{2\sqrt{z} - 1}{2 - 1} = 2\sqrt{z}$.
\end{proof}

\begin{lemma}
Let $\mathsf{LZ_{opt}}$ be the size in bits of an optimal LZ77 encoding of a string of length $n$ over an alphabet of size $\sigma \ge 2$. Then, we have $\mathsf{LZ_{opt}} = \Omega(\log n + z\log\log_\sigma z)$, where $z$ is the number of phrases in the greedy LZ77 parsing of this string.
\label{OptLZ77lower}
\end{lemma}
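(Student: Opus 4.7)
The plan is to establish the two summands of the bound separately. The $\Omega(\log n)$ part is elementary: any LZ77 parsing $g_1\cdots g_{z'}$ of $s$ satisfies $\sum_i \ell_i = n$ with $\ell_i = |g_i|$, and each phrase contributes $\Omega(\log \ell_i)$ bits to the encoding. If $z' \le n/2$, Lemma~\ref{TechLemma} yields $\sum_i \log \ell_i \ge \log(n - z' + 1) \ge \log(n/2) = \Omega(\log n)$; otherwise $z' > n/2$ and the trivial $\Omega(1)$-bit-per-phrase bound already exceeds $\log n$. Recall also that $z' \ge z$ by the optimality of greedy in the number of phrases.

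For the $\Omega(z \log \log_\sigma z)$ part (non-trivial only once $\log_\sigma z$ is above some constant), I would set $L = \tfrac{1}{2}\log_\sigma z$ and use Lemma~\ref{LongPhrasesSet} to isolate a family of at least $z - 2\sqrt{z}$ \emph{long} greedy phrases, namely those of length $\ge L$. For each long $f_j$ define $T_j = \{i : g_i \text{ overlaps } f_j\}$; since the phrases $\{g_i : i \in T_j\}$ cover $f_j$, their lengths sum to at least $|f_j|$, and Lemma~\ref{TechLemma} yields $\sum_{i \in T_j}\log \ell_i \ge \log(|f_j| - |T_j| + 1)$ whenever $|T_j| \le |f_j|$. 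By Lemma~\ref{LZ77intersect} every $i$ lies in at most two sets $T_j$, so $\sum_{j \text{ long}} |T_j| \le 2z'$. Combined with $\mathsf{LZ_{opt}} = \Omega(\sum_i \log\ell_i)$, the task reduces to bounding the right-hand sum from below.

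The main obstacle is that $\log(|f_j| - |T_j| + 1)$ becomes useless (or undefined) when $|T_j|$ is comparable to $|f_j|$, and I would resolve this by a case split. Call a long $f_j$ \emph{good} if $|T_j| \le |f_j|/2$, so that its term is $\ge \log(L/2 + 1) = \Theta(\log \log_\sigma z)$, and \emph{bad} otherwise. Each bad long phrase has $|T_j| > L/2$, so there are fewer than $4z'/L$ of them. Summing the Lemma~\ref{TechLemma} inequality over good long $j$ and using that each index $i$ appears in at most two such sets gives
\begin{equation*}
2\sum_i \log\ell_i \;\ge\; \sum_{j \text{ good long}} \log(|f_j| - |T_j| + 1) \;\ge\; N_{\text{good}} \cdot \log\bigl(\tfrac{L}{2} + 1\bigr),
\end{equation*}
where $N_{\text{good}}$ is the number of good long phrases. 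In the regime $z' \le zL/16$, we get $N_{\text{good}} \ge (z - 2\sqrt{z}) - 4z'/L \ge z/2$ for $z$ large enough, producing $\mathsf{LZ_{opt}} = \Omega(z \log \log_\sigma z)$. In the complementary regime $z' > zL/16$, the trivial $\mathsf{LZ_{opt}} \ge z' > z\log_\sigma z/32$ already dominates $z \log \log_\sigma z$, since $\log x \ge \log\log x$. Small-parameter degeneracies (bounded $z$, or $\log_\sigma z$ below a constant) trivialize the claim and are absorbed into the $\Omega$.
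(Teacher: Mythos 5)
Your proof is correct and follows essentially the same route as the paper: isolate the $\Theta(z)$ greedy phrases of length at least $\frac{1}{2}\log_\sigma z$ via Lemma~\ref{LongPhrasesSet}, apply Lemma~\ref{TechLemma} to the optimal phrases overlapping each of them, and use Lemma~\ref{LZ77intersect} to charge each optimal phrase at most twice. The only divergence is in handling long greedy phrases fragmented into many tiny optimal phrases: you discard these via the good/bad split plus a global case analysis on $z'$, whereas the paper absorbs them locally by adding the $\Omega(1)$-bit-per-phrase cost to the logarithmic terms, i.e., $(h-g)+\log(|f'|-(h-g)) \ge \Omega(\log|f'|)$; both are valid.
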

\begin{proof}
Denote by $f_1f_2\cdots f_{z'}$ the LZ77 parsing corresponding to an optimal LZ77 encoding of the string under consideration.
By the definition of the phrase encoders, we have $\mathsf{LZ_{opt}} \ge \Omega(\sum_{i=1}^{z'} \log |f_i|)$.
It follows from Lemma~\ref{TechLemma} that $\mathsf{LZ_{opt}} \ge \Omega(\log(n - z'))$. Since, obviously, $\mathsf{LZ_{opt}} \ge z'$, the latter implies $\mathsf{LZ_{opt}} \ge \Omega(z' + \log(n - z')) \ge \Omega(\log n)$.

Denote by $f'_1f'_2\cdots f'_z$ the greedy LZ77 parsing of the same string. Let $S$ be the set of all phrases in this parsing with lengths at least $\frac{1}2 \log_\sigma z$. By Lemma~\ref{LongPhrasesSet}, we have $|S| \ge z - 2\sqrt{z} = \Theta(z)$. Consider a phrase $f' \in S$. Let $f_g, f_{g+1}, \ldots, f_{h}$ be all phrases in the parsing $f_1f_2\cdots f_{z'}$ that overlap with the phrase $f'$. Since $|f_gf_{g+1}\cdots f_h| \ge |f'|$, Lemma~\ref{TechLemma} implies that $(h - g) + \log|f_g| + \log|f_{g+1}| + \cdots + \log|f_h| \ge (h - g) + \log(|f'| - (h - g)) \ge \Omega(\log|f'|)$. Thus, the encodings of the phrases $f_g, f_{g+1}, \ldots, f_h$ all together occupy $\Omega(\log|f'|)$ bits. By Lemma~\ref{LZ77intersect}, any phrase $f_i$ of the parsing $f_1\cdots f_{z'}$ overlaps with at most two phrases of the parsing $f'_1\cdots f'_z$. Therefore, the encodings of all phrases $f_1, \ldots, f_{z'}$ occupy $\frac{1}{2}\Omega(\sum_{f' \in S} \log|f'|) \ge \Omega(|S|\log\log_\sigma z) = \Omega(z\log\log_\sigma z)$ overall bits.
\end{proof}

\begin{theorem}
Let $z$ be the number of phrases in the greedy LZ77 parsing of a given string of length $n$ drawn from an alphabet of size $\sigma$. Denote by $\mathsf{LZ_{gr}}$ and $\mathsf{LZ_{opt}}$ the sizes in bits of, respectively, the greedy and optimal LZ77 encodings of this string. Then, we have $\frac{\mathsf{LZ_{gr}}}{\mathsf{LZ_{opt}}} = O(\min\{z, \frac{\log n}{\log\log_\sigma z}\})$.\label{MainTheorem}
\end{theorem}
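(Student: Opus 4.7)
The plan is to obtain the theorem by a one-line division, using the ready-made upper bound $\mathsf{LZ_{gr}}=O(z\log n)$ from Lemma~\ref{GreedyLZ77upper} and the composite lower bound $\mathsf{LZ_{opt}}=\Omega(\log n+z\log\log_\sigma z)$ from Lemma~\ref{OptLZ77lower} (valid whenever $\sigma\ge 2$). The key observation is that an $\Omega$ applied to a sum of two nonnegative quantities is also an $\Omega$ of each summand separately, and this is exactly what produces the two branches of the $\min$ in the statement.

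First I would write down the two instances $\mathsf{LZ_{opt}}=\Omega(\log n)$ and $\mathsf{LZ_{opt}}=\Omega(z\log\log_\sigma z)$ implied by Lemma~\ref{OptLZ77lower}. Dividing the upper bound $\mathsf{LZ_{gr}}=O(z\log n)$ by the first gives $\mathsf{LZ_{gr}}/\mathsf{LZ_{opt}}=O(z)$, and dividing by the second (when $\log\log_\sigma z>0$) gives $\mathsf{LZ_{gr}}/\mathsf{LZ_{opt}}=O(\log n/\log\log_\sigma z)$. Since both bounds hold simultaneously, the ratio is at most $O(\min\{z,\log n/\log\log_\sigma z\})$, which is precisely the conclusion.

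The remaining work is purely bookkeeping on degenerate cases, which is where any subtlety lives. If $\sigma=1$ the string is a single-letter power, greedy produces at most two phrases and is trivially optimal, so the ratio is $O(1)$. If $z=1$ the greedy LZ77 parsing consists of one phrase and is optimal, so the ratio is $1$. Finally, if $\log\log_\sigma z\le 0$ (equivalently, $z\le\sigma$), the term $\log n/\log\log_\sigma z$ in the $\min$ is either undefined or nonpositive, so $\min\{z,\log n/\log\log_\sigma z\}=z$, and the $O(z)$ bound derived from $\mathsf{LZ_{opt}}=\Omega(\log n)$ alone suffices. I do not anticipate any genuine obstacle: once Lemmas~\ref{GreedyLZ77upper} and~\ref{OptLZ77lower} are taken as given, the theorem is a routine division plus a short case split on the sign and size of $\log\log_\sigma z$.
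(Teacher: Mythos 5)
Your proposal is correct and follows essentially the same route as the paper: divide the $O(z\log n)$ upper bound of Lemma~\ref{GreedyLZ77upper} by the $\Omega(\log n + z\log\log_\sigma z)$ lower bound of Lemma~\ref{OptLZ77lower} and observe that the sum in the denominator yields both branches of the $\min$. Your extra bookkeeping for the degenerate cases ($\sigma=1$, $z\le\sigma$) is a harmless refinement that the paper leaves implicit.
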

\begin{proof}
By Lemmas~\ref{GreedyLZ77upper} and~\ref{OptLZ77lower}, $\frac{\mathsf{LZ_{gr}}}{\mathsf{LZ_{opt}}} \le \frac{O(z\log n)}{\Omega(\log n + z\log\log_\sigma z)} = O(\frac{z\log n}{\log n + z\log\log_\sigma z})$. Since $\frac{z\log n}{\log n + z\log\log_\sigma z} \le \frac{z\log n}{\log n} = z$ and $\frac{z\log n}{\log n + z\log\log_\sigma z} \le \frac{\log n}{\log\log_\sigma z}$, the result follows.
\end{proof}

\begin{corollary}
For constant alphabet, $\frac{\mathsf{LZ_{gr}}}{\mathsf{LZ_{opt}}} = O(\frac{\log n}{\log\log\log n})$.\label{OnlyNbound}
\end{corollary}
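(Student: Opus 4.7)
The plan is to derive the corollary directly from Theorem~\ref{MainTheorem} by a case analysis on the number $z$ of greedy phrases, balanced against the threshold $T := \frac{\log n}{\log\log\log n}$. For a constant alphabet we have $\log_\sigma z = \Theta(\log z)$, so Theorem~\ref{MainTheorem} specializes to $\frac{\mathsf{LZ_{gr}}}{\mathsf{LZ_{opt}}} = O\bigl(\min\{z, \frac{\log n}{\log\log z}\}\bigr)$, and the goal is to show this minimum is $O(T)$ regardless of where $z$ sits.

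First I would dispense with the trivial regimes: if $\sigma = 1$, the string is unary, so both $\mathsf{LZ_{gr}}$ and $\mathsf{LZ_{opt}}$ consist of a single phrase and the ratio is $O(1)$; and for $n$ below an absolute constant the statement holds trivially. So one may assume $\sigma \ge 2$ and $n$ large enough that $\log\log\log n \ge 1$.

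Next I would split on $z$ at the threshold $T$. In the small case $z \le T$, the first argument of the $\min$ already gives $\frac{\mathsf{LZ_{gr}}}{\mathsf{LZ_{opt}}} = O(z) = O(T)$. In the large case $z > T$, I would take logarithms twice: $\log z \ge \log T = \log\log n - \log\log\log\log n = \Theta(\log\log n)$ for large $n$ (the subtracted term is of lower order), and therefore $\log\log z \ge \log\log T = \Omega(\log\log\log n)$. Substituting this into the second argument of the $\min$ yields $\frac{\log n}{\log\log z} = O\!\left(\frac{\log n}{\log\log\log n}\right) = O(T)$.

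The only delicate point, which I expect to be the main place one has to be careful, is the chain of lower-order estimates $\log T = \Theta(\log\log n)$ and $\log\log T = \Omega(\log\log\log n)$; everything else is a direct plug-in of the previously established Theorem~\ref{MainTheorem}. These estimates are straightforward once one verifies that $\log\log\log\log n = o(\log\log n)$ and $\log\log\log\log n = o(\log\log\log n)$, which holds for $n$ beyond an absolute constant. Combining the two cases produces the claimed $O\!\bigl(\frac{\log n}{\log\log\log n}\bigr)$ bound.
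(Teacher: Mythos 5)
Your proposal is correct and follows essentially the same route as the paper: both specialize Theorem~\ref{MainTheorem} to $O(\min\{z,\frac{\log n}{\log\log z}\})$ and then observe that this minimum peaks where the two terms balance, at $z=\Theta(\frac{\log n}{\log\log\log n})$. Your explicit case split at the threshold $T$ (with the verification that $\log\log T=\Omega(\log\log\log n)$) is just a slightly more detailed write-up of the paper's ``solve $z=\frac{\log n}{\log\log z}$'' step.
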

\begin{proof}
We have $\frac{\mathsf{LZ_{gr}}}{\mathsf{LZ_{opt}}} = O(\min\{z, \frac{\log n}{\log\log z}\})$ due to Theorem~\ref{MainTheorem}. The functions $z \mapsto z$ and $z \mapsto \frac{\log n}{\log\log z}$, respectively, increase and decrease as $z$ grows. Therefore, the maximum of the function $\min\{z, \frac{\log n}{\log\log z}\}$ is reached when $z = \frac{\log n}{\log\log z}$. Solving this equation, we obtain $z = \Theta(\frac{\log n}{\log\log\log n})$, which proves the result.
\end{proof}

\begin{remark}
To generalize the described results to nonclassical LZ77 parsings, one should use, instead of Lemma~\ref{DistinctPhrases}, the following straightforward lemma.
\begin{lemma}
Suppose that $s = f_1f_2\cdots f_z$ is the greedy nonclassical LZ77 parsing of a given string $s$; then, all the strings $f_i\cdot f_{i+1}[1]$, for $i \in [1..z{-}1]$, are distinct.\label{DistinctPhrases2}
\end{lemma}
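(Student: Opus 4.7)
The plan is to mimic the contradiction argument underlying Lemma~\ref{DistinctPhrases}, exploiting the fact that in the nonclassical setting the phrase itself, and not merely its prefix, is what must have an earlier occurrence. Concretely, I would suppose for contradiction that $f_i \cdot f_{i+1}[1] = f_j \cdot f_{j+1}[1]$ for some indices $1 \le i < j \le z-1$, and let $p$ and $q$ denote the starting positions in $s$ of the phrases $f_i$ and $f_j$, respectively. From $i < j$ one immediately obtains $p < q$, so in particular $p \le |f_1 f_2 \cdots f_{j-1}|$.

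The key observation is then that the string $f_j \cdot f_{j+1}[1]$, of length $|f_j| + 1$, starts at position $q$ in $s$ and, by the assumed equality, also occurs starting at the strictly earlier position $p$. Hence this string has an earlier occurrence in the sense required by the nonclassical definition, so it is an admissible candidate phrase at position $q$. But this contradicts the greedy choice of $f_j$, since the greedy nonclassical procedure must pick the longest admissible candidate and $|f_j \cdot f_{j+1}[1]| > |f_j|$.

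I do not anticipate any substantial obstacle. The restriction $i, j \le z-1$ ensures that the characters $f_{i+1}[1]$ and $f_{j+1}[1]$ exist and that the strings under comparison are well-defined; the strict inequality $p < q$ certifies that the occurrence at $p$ is genuinely earlier than the one at $q$. Compared to the proof of Lemma~\ref{DistinctPhrases}, where one has to append an extra character to turn an equality of phrases into a longer admissible extension, the present argument is in fact slightly simpler, because the extra character is already supplied by $f_{j+1}[1]$ and the nonclassical condition applies directly to the full candidate string.
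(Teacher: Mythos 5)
Your argument is correct: the occurrence of $f_j\cdot f_{+1}[1]$ at the strictly earlier position $p$ makes it an admissible candidate at position $q$ under the nonclassical definition, contradicting the greedy (longest) choice of $f_j$. The paper states this lemma without proof, calling it straightforward, and your contradiction argument is exactly the intended one.
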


The rest can be easily reconstructed by analogy.
\end{remark}

\section{Lower Bound}\label{SectLowerBound}

We now construct a series of example showing that, for several important cases, the upper bound given in Theorem~\ref{MainTheorem} is tight. In particular, on constant alphabets, i.e., when $\sigma = O(1)$, Theorem~\ref{ExampleTheorem} complements Theorem~\ref{MainTheorem} showing that the bound $O(\min\{z, \frac{\log n}{\log\log z}\})$ is tight. Further, putting $z = \frac{\log n}{\log\log\log n}$ and $\sigma = 2$ in Theorem~\ref{ExampleTheorem}, we show that the upper bound given in Corollary~\ref{OnlyNbound} is tight.

\begin{theorem}
For any given integers $n > 1$, $\sigma \in [2..n]$, and $z \in [\sigma..\frac{n}{\log_\sigma n}]$, there is a string of length $n$ over an alphabet of size $\sigma$ such that the number of phrases in the greedy LZ77 parsing of this string is $\Theta(z)$ and the sizes $\mathsf{LZ_{gr}}$ and $\mathsf{LZ_{opt}}$ of, respectively, the greedy and optimal LZ77 encodings of this string are related as $\frac{\mathsf{LZ_{gr}}}{\mathsf{LZ_{opt}}} \ge \Omega(\min\{z, \frac{\log n}{\log\log_\sigma z + \log\sigma}\})$.
\label{ExampleTheorem}
\end{theorem}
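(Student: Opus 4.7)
The plan is to construct, for each admissible triple $(n,\sigma,z)$, an explicit string $s$ of length $n$ over an alphabet of size $\sigma$ such that $\mathsf{LZ_{gr}} = \Omega(z\log n)$ while some exhibited competing encoding has size $O(\log n + z(\log\log_\sigma z + \log\sigma))$ bits. Taking the ratio and applying the same elementary case split as in the proof of Theorem~\ref{MainTheorem} then yields
\[
\frac{\mathsf{LZ_{gr}}}{\mathsf{LZ_{opt}}} \ge \Omega\!\left(\min\!\left\{z,\ \frac{\log n}{\log\log_\sigma z + \log\sigma}\right\}\right).
\]

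Let $\ell := \lfloor\tfrac{1}{2}\log_\sigma z\rfloor$. I would take $s = D \cdot P \cdot T$, where $D$ is a short dictionary (e.g., a $\sigma$-ary de Bruijn word of order $\ell$, of length $\Theta(\sqrt{z})$) containing every $\ell$-gram as a substring, $T$ is a carefully engineered tail of length $\Theta(z\ell)$, and $P$ is a long pad of one repeated letter sized so that $|s|=n$; the range $z\in[\sigma,\,n/\log_\sigma n]$ guarantees that this is feasible. The pad serves only to inflate the distance from $T$ back into $D$ so that any back-reference from $T$ into $D$ carries $\log d = \Theta(\log n)$, and both parsers handle $P$ in $O(1)$ phrases contributing $O(\log n)$ bits. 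The dictionary $D$ is parsed, in the competing encoding, by its own greedy LZ77 parsing with optimised distances, yielding $O(\sqrt{z}/\log_\sigma z)$ phrases of cost $O(\log z)$ each and thus $O(\sqrt{z}\log\sigma)$ bits in total.

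The tail $T$ is designed so that (i) the greedy parser partitions $T$ into $\Theta(z)$ phrases of length $\Theta(\ell)$, at each of which the longest match occurs \emph{only} inside $D$, forcing $d = \Theta(n)$ and hence a per-phrase cost of $\Theta(\log n)$ bits; and (ii) $T$ admits a competing parsing into $\Theta(z)$ phrases of length $\Theta(\ell)$, each of which has an earlier occurrence inside $T$ itself at back-distance $O(\ell)$, hence a per-phrase cost of $O(\log\ell+\log\sigma) = O(\log\log_\sigma z + \log\sigma)$ bits. Property~(i) gives $\mathsf{LZ_{gr}} = \Omega(z\log n)$ together with the overall greedy phrase count $\Theta(z)$; property~(ii), combined with the contributions from $D$ and $P$, gives $\mathsf{LZ_{opt}} = O(\log n + z(\log\log_\sigma z + \log\sigma))$; the ratio then follows. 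The $z$-dominated case (where $\min\{z,\cdot\} = z$) reduces to combining $\mathsf{LZ_{gr}} = \Omega(z\log n)$ with the unconditional bound $\mathsf{LZ_{opt}} = \Omega(\log n)$ from Lemma~\ref{OptLZ77lower}.

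The main obstacle is engineering $T$ so that (i) and (ii) hold simultaneously. Because the greedy encoder minimises $d$ among longest matches, any length-$\ell$ match that (ii) uses inside $T$ would also be picked up by greedy, collapsing the two encodings. The parsings can therefore be reconciled only if they view $T$ through \emph{different} $\ell$-gram windows, i.e., the phrase boundaries in (ii) are offset from those of (i) by roughly $\ell/2$. One natural recipe is to build $T$ as a concatenation of carefully chosen length-$\Theta(\ell)$ blocks from $D$ so that (a) every block boundary seen by greedy is certified to have no short back-reference (driving greedy into $D$), while (b) shifting the reading window by $\ell/2$ exposes a sequence of $\ell$-grams each of which has already occurred a distance $O(\ell)$ back inside $T$. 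Explicitly constructing such a block sequence over an alphabet of size $\sigma$, and verifying that both parsers behave as claimed, is the technical heart of the proof; the asymptotic estimates above then follow directly.
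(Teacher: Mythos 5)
Your high-level architecture---a dictionary, a long unary pad that inflates every back-reference across it to $\Theta(\log n)$ bits, and a tail of $\Theta(z)$ blocks that the greedy parser must resolve across the pad while a competing parsing resolves them locally---is exactly the architecture of the intended construction, and your target estimates $\mathsf{LZ_{gr}} = \Omega(z\log n)$ and $\mathsf{LZ_{opt}} = O(\log n + z(\log\log_\sigma z + \log\sigma))$ are the right ones. However, the proposal stops precisely where the proof has to begin: you correctly identify that everything hinges on exhibiting a tail satisfying properties (i) and (ii) simultaneously, and you then leave that construction unspecified, calling it ``the technical heart of the proof.'' As written this is a plan, not a proof, and the mechanism you sketch for reconciling (i) and (ii)---phrase boundaries offset by $\ell/2$ over de Bruijn blocks---is neither carried out nor obviously realizable: you would have to certify that \emph{no} length-$\ell$ window starting at a greedy boundary has a nearby earlier occurrence while \emph{every} shifted window does, and nothing in the proposal establishes this.

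The paper resolves the obstacle differently, and the difference is instructive. It takes the dictionary words $s_1,\ldots,s_k$ from a $\tau$-ary Gray code, so that consecutive words differ in exactly one position, and forms the tail $bs_1bs_2b\cdots s_kb$ with a separator letter $b$. The greedy parser's \emph{longest} match at each block is the full word $s_i$, whose only earlier occurrence lies in the prefix before the pad, so greedy is forced to pay $\Omega(\log\ell)$ per phrase; the competing encoding does not shift windows but instead splits each block $s_ib$ into two \emph{shorter} phrases around the single position where $s_i$ differs from $s_{i-1}$, each of which occurs in $s_{i-1}$ only $O(m)$ positions back, costing $O(\log m + \log\sigma)$. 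In other words, the tension you worry about (greedy minimizing $d$ would steal any local match) is dissolved because greedy is committed to a longer phrase that is only available far away, while the cheap references attach to strictly shorter substrings---no window-offset argument is needed. Two further small points: the binary case needs a separate, more delicate construction (the separator trick fails for $\sigma=2$), which your proposal does not address; and in the $z$-dominated branch you invoke the lower bound $\mathsf{LZ_{opt}} = \Omega(\log n)$ from Lemma~\ref{OptLZ77lower}, which points the wrong way---what you need there is the \emph{upper} bound $\mathsf{LZ_{opt}} = O(\log n)$, which follows from your exhibited encoding when $z(\log\log_\sigma z + \log\sigma) \le \log n$, and which your initial min-splitting inequality already handles correctly.
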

\begin{proof}
If $\sigma \ge n/4$, then any LZ77 encoding of a string of length $n$ containing $\sigma$ distinct letters obviously occupies $\Theta(\sigma\log\sigma) = \Theta(n\log n)$ bits and, hence, the statement of the theorem, which degenerates to $\frac{\mathsf{LZ_{gr}}}{\mathsf{LZ_{opt}}} \ge \Omega(1)$, trivially holds. Assume that $\sigma < n/4$.


We first consider the case $\sigma \ge 3$ as it is simpler. Suppose that the alphabet is the set $[1..\sigma]$. Denote $b = 1$ and $\tau = \sigma - 1$ ($b$ is a special letter-separator with small code and $\tau$ is the size of the set $[1..\sigma] \setminus \{b\} = [2..\sigma]$). Let $m$ be the minimal integer such that $\tau^m \ge z$, i.e., $m = \lceil\log_{\tau} z\rceil$. Note that $m = \Theta(\log_\sigma z)$. In~\cite{Cohn} it is shown that all $\tau^m$ possible strings of length $m$ over the alphabet $[2..\sigma]$ can be arranged in a sequence $s_1, s_2, \ldots, s_{\tau^m}$ (called a \emph{$\tau$-ary Gray code}~\cite{Cohn,Gray}) such that, for any $i \in [2..\tau^m]$, the strings $s_{i-1}$ and $s_i$ differ in exactly one position. Moreover, we can choose such sequence so that $s_{\tau^m} = a^m$, where $a$ is an arbitrary letter from $[2..\sigma]$.

Let $k$ and $\ell$ be positive integers such that $k < \tau^m$ and $\ell > m$. Our example is the following string (the numbers $k$ and $\ell$ will be adjusted below so that $k = \Theta(z)$ and $\ell \ge \frac{1}{2}n$):
$$
s = s_1 s_2 \cdots s_k\cdot a^{\ell}\cdot bs_1bs_2b \cdots s_k b.
$$

Let us consider the greedy LZ77 parsing of $s$ and the corresponding greedy LZ77 encoding. Since the letter $b$ first occurs in the substring $a^\ell b$, the greedy construction procedure builds the parsing of $s_1 b s_2 b \cdots s_k b$ starting from the first position of this substring. Since $k < \tau^m$ and $s_{\tau^m} = a^m$, it follows from the definition of the sequence $s_1, \ldots, s_{k}$ that, for any $i\in[1..k]$, the longest prefix of the string $s_i b s_{i+1} b \cdots s_k b$ that has an earlier occurrence in $s$ is $s_i$ and this earlier occurrence is a substring of the prefix $s_1 s_2 \cdots s_ka^m$ of $s$. Therefore, the greedy algorithm decomposes the suffix $s_1 b s_2 b \cdots s_k b$ into $k$ phrases $s_i b$, for $i \in [1..k]$. It is easy to see that each of these phrases is encoded in $\Omega(\log\ell)$ bits (this is the number of bits required to encode the distance between the phrase and its earlier occurrence). Hence, the size in bits of the greedy LZ77 encoding of $s$ is $\mathsf{LZ_{gr}} \ge \Omega(k\log\ell)$.

Now let us consider a better encoding of the same string $s$. For simplicity, we omit the description of the encoding of the prefix $s_1 s_2 \cdots s_k$ as it is very similar to the encoding of the suffix $s_1 b s_2 b \cdots s_k b$ discussed below. First, we parse the substring $a^{\ell}b$ into two phrases $a$ and $a^{\ell-1}b$, which are encoded in $O(\log\ell + \log\sigma)$ bits (the referenced part $a^{\ell-1}$ of $a^{\ell-1}b$ is self-referential). Then, we encode the substring $s_1 b$ as in the greedy approach by one phrase taking $O(\log\ell)$ bits (recall that $\ell > m$ and $b = 1$ and, hence, the length $|s_1b| = m + 1$ and the letter $b$ are encoded in $O(\log\ell)$ bits). Now we consecutively encode each substring $s_i b$, for $i \in [2..k]$, as follows. Suppose that the strings $s_i$ and $s_{i-1}$ differ at position $j$, i.e., $s_{i-1}[1..j{-}1] = s_i[1..j{-}1]$ and $s_{i-1}[j{+}1 .. m] = s_i[j{+}1 .. m]$. We decompose $s_i b$ into two phrases $s_i[1..j]$ and $s_i[j{+}1..m]b$. Since the strings $s_i[1..j{-}1]$ and $s_i[j{+}1..m]$ both are substrings of the string $s_{i-1}$ and have length $O(m)$, the encoding of the produced two phrases occupies $O(\log m + \log\sigma)$ bits. Hence, the whole suffix $s_1bs_2b\cdots s_kb$ can be encoded in $O(k\log m + k\log\sigma)$ bits; the prefix $s_1 s_2 \cdots s_k$ can be encoded similarly in $O(k\log m + k\log\sigma)$ bits. Thus, we obtain an encoding of the string $s$ that occupies $O(\log\ell + k\log m + k\log\sigma)$ bits. Therefore, the size in bits of the optimal LZ77 encoding of $s$ is $\mathsf{LZ_{opt}} = O(\log\ell + k\log m + k\log\sigma)$.

Recall that $m = \Theta(\log_\sigma z)$. Combining the estimations on $\mathsf{LZ_{gr}}$ and $\mathsf{LZ_{opt}}$, we obtain $\frac{\mathsf{LZ_{gr}}}{\mathsf{LZ_{opt}}} \ge \frac{\Omega(k\log\ell)}{O(\log\ell + k(\log m + \log\sigma))} \ge \Omega(\frac{k\log\ell}{\log\ell + k(\log\log_\sigma z + \log\sigma)})$. Since $\frac{k\log\ell}{\log\ell + k(\log\log_\sigma z + \log\sigma)} \ge \frac{k\log\ell}{2\cdot\max\{\log\ell, k(\log\log_\sigma z + \log\sigma)\}} = \frac{1}{2} \min\{k, \frac{\log\ell}{\log\log_\sigma z + \log\sigma}\}$, we obtain $\frac{\mathsf{LZ_{gr}}}{\mathsf{LZ_{opt}}} \ge \Omega(\min\{k, \frac{\log\ell}{\log\log_\sigma z + \log\sigma}\})$. Note that the number of phrases in the greedy LZ77 parsing of $s$ is $\Theta(k)$ and $|s| = \ell + 1 + k(2m + 1)$. We put $\ell = n - k(2m + 1) - 1$ so that $|s| = n$. Since $z \in [2..\frac{n}{\log_\sigma n}]$ and $m = \Theta(\log_{\sigma} z)$, we have $k(2m + 1) \le O(n)$ if $k = \Theta(z)$. Then, it is straightforward that the parameter $k$ can be chosen so that $k = \Theta(z)$ and $\ell = n - k(2m + 1) - 1 \ge \frac{1}{2}n$. Hence, we derive $\frac{\mathsf{LZ_{gr}}}{\mathsf{LZ_{opt}}} \ge \Omega(\min\{z, \frac{\log n}{\log\log_\sigma z + \log\sigma}\})$. (If not all letters of the alphabet $[1..\sigma]$ indeed occur in the constructed string, we append all unused letters to the end of $s$ and reduce $\ell$ appropriately; as $\sigma < n/4$, we have $\ell \ge \frac{1}{4}n$ in the end.)


Now assume that $\sigma = 2$.
Let $\{0,1\}$ be the alphabet.
Similarly to the above analysis, we fix a sequence $s_1, \ldots, s_{2^m}$ of all binary strings of length $m = \lceil\log z\rceil$ such that, for $i \in [2..2^m]$, $s_{i-1}$ and $s_i$ differ in exactly one position, and we choose two parameters $\ell > 4m$ and $k < 2^m$, which will be adjusted later so that $\ell \ge \frac{1}2n$ and $k = \Theta(z)$. It is well known that one can fix the sequence $s_1, \ldots, s_{2^m}$ so that $s_{2^m} = 0^m$. Our example is defined as follows:
$$
s = s_1 0^m 1 s_2 0^m 1 \cdots s_k 0^m 1 0^{\ell}1 s_1 0^m 1 c_1 s_2 0^m 1 c_2\cdots s_k 0^m 1 c_k,
$$
where $c_k = 1$ and, for $i \in [1..k{-}1]$, $c_i = 0$ if $s_{i+1}[1] = 1$, and $c_i = 1$ otherwise.

Since, for any $i \in [1..k]$, $s_i \ne 0^m$ (as $s_i = 0^m$ iff $i = 2^m$, and $k < 2^m$) and $\ell > 4m$, the greedy LZ77 parser necessarily makes a phrase that is a suffix of the substring $0^\ell1$ and, then, parses the suffix $s_1 0^m 1 c_1 s_2 0^m 1 c_2\cdots s_k 0^m 1 c_k$ from the first position. It is straightforward that, for any $i \in [1..k]$, the string $0^m1$ has only one occurrence in the strings $1s_i0^m1$ and $1c_{i-1}s_i0^m1$ (for $i > 1$). Therefore, for any $i \in [1..k]$, the string $s_i 0^m 1$ has only one occurrence in the prefix $s_1 0^m 1 s_2 0^m 1 \cdots s_k 0^m 1$ and the string $s_i 0^m 1 c_i$ has only one occurrence in the whole string $s$. Then, the greedy parser parses the suffix $s_1 0^m 1 c_1 s_2 0^m 1 c_2\cdots s_k 0^m 1 c_k$ into $k$ phrases $s_i 0^m 1 c_i$, for $i\in[1..k]$. This parsing produces an encoding of size $\Omega(k\log\ell)$ bits. At the same time, there is an LZ77 encoding for $s$ of size $O(\log\ell + k\log m)$ bits. The further analysis is very similar to the analysis of the case $\sigma \ge 3$: we put $\ell = n - k(4m + 3) - 1$ so that $|s| = n$, and we adjust $k$ so that $k = \Theta(z)$ and $\ell \ge \frac{1}{2}n$, which is possible because $m \le \log z + 1$ and $z \le \frac{n}{\log n}$. We omit the details as they are analogous.
\end{proof}

\begin{remark}
The condition $\sigma \le z \le \frac{n}{\log_\sigma n}$ from Theorem~\ref{ExampleTheorem} is justified by the following observations. First, it is obvious that any LZ77 parsing has at least $\sigma$ phrases and, hence, the inequality $\sigma \le z$ holds. Secondly, by Lemma~\ref{LongPhrasesSet}, at least $z - 2\sqrt{z}$ phrases in the greedy LZ77 parsing have length at least $\frac{1}{2}\log_\sigma z$, where $z$ is the total number of phrases; hence, we obtain $z\log_\sigma z \le O(n)$ and, solving this inequality, $z = O(\frac{n}{\log_\sigma n})$, which justifies the condition $z \le \frac{n}{\log_\sigma n}$.
\end{remark}

\begin{remark}
Let us sketch the way in which the constructions from the proof of Theorem~\ref{ExampleTheorem} can be adapted to nonclassical LZ77 encodings. For the case $\sigma \ge 3$, the corresponding string is as follows (the notation is from the proof of Theorem~\ref{ExampleTheorem}):
$$
s = bs_1bs_2\cdots bs_kb\cdot a^\ell\cdot bs_1bbs_2bb\cdots bbs_kb.
$$
The suffix $bs_1bbs_2bb\cdots bbs_kb$ of this string is greedily parsed into the phrases $bs_ib$, for $i \in [1..k]$. For the case $\sigma = 2$, the corresponding string is as follows:
$$
s = 10s_1\alpha 10s_2\alpha 1\cdots 10s_k\alpha\cdot 0^\ell\cdot 1 0s_1\alpha 0s_2\alpha 0\cdots 0s_k\alpha 0,
$$
where $\alpha = 0^{m+1}1$. The suffix $1 0s_1\alpha 0s_2\alpha 0\cdots 0s_k\alpha 0$ of $s$ is greedily parsed into the phrases $10s_1\alpha$ and $0s_i\alpha$, for $i \in [2..k]$. We omit the detailed analysis as it is analogous to the analysis in the proof of Theorem~\ref{ExampleTheorem}.
\end{remark}

\section{Arbitrary Alphabets}\label{SectArbitraryAlphabet}

The following corollary shows that, in the case of ``non-extremely compressible'' string ($z \ge 2^{\log^\epsilon n}$) over a polylogarithmic alphabet ($\sigma \le \log^{O(1)} n$), which is arguably the most important case for practice, the upper and lower bounds from Theorems~\ref{MainTheorem} and~\ref{ExampleTheorem} degenerate to $\Theta(\frac{\log n}{\log\log n})$ and, hence, are tight. (Note that $2^{\log^\epsilon n} = o(n^\delta)$ for any fixed constants $\epsilon \in (0,1)$ and $\delta \in (0,1)$.)

\begin{corollary}
Let $z$ be the number of phrases in the greedy LZ77 parsing of a given string of length $n$ drawn from an alphabet of size $\sigma$.
Suppose that $\sigma \le \log^{O(1)} n$ and $z \ge 2^{\log^\epsilon n}$, for a fixed constant $\epsilon \in (0,1)$. Denote by $\mathsf{LZ_{gr}}$ and $\mathsf{LZ_{opt}}$ the sizes in bits of, respectively, the greedy and optimal LZ77 encodings of this string. Then, we have $\frac{\mathsf{LZ_{gr}}}{\mathsf{LZ_{opt}}} \le O(\frac{\log n}{\log\log n})$ and this upper bound is tight.
\label{MainIsTight2}
\end{corollary}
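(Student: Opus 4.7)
The plan is to derive both directions by specializing Theorems~\ref{MainTheorem} and~\ref{ExampleTheorem} to the stated parameter regime. The one algebraic fact doing all the work in both directions is the estimate
\[
\log\log_\sigma z \;=\; \log\log z - \log\log\sigma \;=\; \Theta(\log\log n),
\]
which holds under the hypotheses: from $\log z \ge \log^\epsilon n$ we obtain $\log\log z \ge \epsilon \log\log n$, while $\log\sigma = O(\log\log n)$ yields $\log\log\sigma = O(\log\log\log n)$, which is dominated by the first term.

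For the upper bound I would invoke Theorem~\ref{MainTheorem} and keep only the second argument of its min, obtaining
\[
\frac{\mathsf{LZ_{gr}}}{\mathsf{LZ_{opt}}} \;=\; O\!\left(\frac{\log n}{\log\log_\sigma z}\right) \;=\; O\!\left(\frac{\log n}{\log\log n}\right)
\]
by the displayed estimate. No control on $z$ beyond what the hypothesis supplies is needed.

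For the matching lower bound I would exhibit a single witness by instantiating Theorem~\ref{ExampleTheorem} with $\sigma = 2$ and targeting $k = \Theta(2^{\log^\epsilon n})$ phrases (choosing the implicit constant a touch larger than $1$ so that, after the $\Theta(\cdot)$ slack in the theorem's conclusion, the actual number of phrases still satisfies $z \ge 2^{\log^\epsilon n}$). One needs to check that this pair $(\sigma,z)$ lies in the range $\sigma \le z \le n/\log_\sigma n$ demanded by Theorem~\ref{ExampleTheorem}, which is immediate for large $n$ since $2^{\log^\epsilon n}$ is super-polylogarithmic but sub-polynomial. The theorem then produces a binary string with
\[
\frac{\mathsf{LZ_{gr}}}{\mathsf{LZ_{opt}}} \;\ge\; \Omega\!\left(\min\!\left\{z,\; \frac{\log n}{\log\log_\sigma z + \log\sigma}\right\}\right).
\]
Since $\log\sigma = O(1)$ and $\log\log_\sigma z = O(\log\log n)$, the second term is $\Omega(\log n / \log\log n)$; since $z = 2^{\log^\epsilon n}$ grows faster than any polylog of $n$, the first term dominates it. Hence the min, and with it the ratio, is $\Omega(\log n / \log\log n)$, matching the upper bound.

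I do not anticipate a genuine obstacle: once the two previous theorems are in hand, the corollary is pure asymptotic bookkeeping. The only minor technicality is the one flagged above, namely ensuring the witness string satisfies the hypothesis $z \ge 2^{\log^\epsilon n}$ on the nose rather than up to a constant factor; this is absorbed either by enlarging the construction's target by a constant or by weakening $\epsilon$ by an arbitrarily small amount, neither of which affects the final bound.
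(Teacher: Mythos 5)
Your proposal is correct and matches the paper's argument: the paper's one-line proof is exactly the observation that $\log\log n \ge \log\log_\sigma z \ge \log\frac{\log^\epsilon n}{O(\log\log n)} = \Theta(\log\log n)$ under the stated hypotheses, after which both directions follow by plugging into Theorems~\ref{MainTheorem} and~\ref{ExampleTheorem}. Your extra care about instantiating the witness (checking $\sigma \le z \le n/\log_\sigma n$ and that the resulting $z$ meets the hypothesis on the nose) is sound bookkeeping that the paper leaves implicit.
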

\begin{proof}
The result follows from Theorems~\ref{MainTheorem} and~\ref{ExampleTheorem} since $\log\log n \ge \log\log_\sigma z \ge \log\frac{\log^\epsilon n}{O(\log\log n)} = \Theta(\log\log n)$.
\end{proof}

Now let us consider bounds on the ratio $\frac{\mathsf{LZ_{gr}}}{\mathsf{LZ_{opt}}}$ that are independent of the parameters $z$ and $\sigma$.

In~\cite{FerraginaNittoVenturini} it was proved that $O(\log n)$ is an upper bound on the ratio $\frac{\mathsf{LZ_{gr}}}{\mathsf{LZ_{opt}}}$. It turns out that this bound is tight on sufficiently large non-constant alphabets. Precisely, a series of examples on which $\frac{\mathsf{LZ_{gr}}}{\mathsf{LZ_{opt}}} = \Omega(\log n)$ can be constructed on an alphabet of size $O(\log n)$. Therefore, the upper bound $O(\frac{\log n}{\log\log\log n})$ on the ratio $\frac{\mathsf{LZ_{gr}}}{\mathsf{LZ_{opt}}}$, which, by Corollary~\ref{OnlyNbound}, holds for constant alphabets and is tight, does not hold, in general, even for alphabets of logarithmic size. In examples showing this, we use the following well-known combinatorial structure.

A \emph{Steiner system} $S(t,k,n)$ is a set $S$ of size $n$ and a family of $k$-element subsets of $S$, called \emph{blocks}, such that each subset of $S$ of size $t$ is contained in exactly one block. We are particularly interested in the Steiner systems $S(2, 2^{2^{i-1}}, 2^{2^i})$, which can be constructed for any positive integers $i$ (the structure is realized on a finite affine plane of order $2^{2^{i-1}}$ and the blocks are lines in the plane; see~\cite{ColbournDinitz}).
It is well known that the number of blocks in the Steiner system $S(2, 2^{2^{i-1}}, 2^{2^i})$ is ${2^{2^i} \choose 2} / {2^{2^{i-1}} \choose 2}$.

\begin{theorem}
For any integer $n > 1$, there is a string of length $n$ over an alphabet of size $O(\log n)$ such that the sizes $\mathsf{LZ_{gr}}$ and $\mathsf{LZ_{opt}}$ of, respectively, the greedy and optimal LZ77 encodings of this string are related as $\frac{\mathsf{LZ_{gr}}}{\mathsf{LZ_{opt}}} \ge \Omega(\log n)$.
\label{ExampleTheorem2}
\end{theorem}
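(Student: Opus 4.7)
The plan is to exhibit, for each $n$, a string of length $n$ over an alphabet of size $\Theta(\log n)$ on which $\mathsf{LZ_{gr}}/\mathsf{LZ_{opt}} = \Omega(\log n)$. By Theorem~\ref{MainTheorem}, such a ratio can be achieved only when $\log_\sigma z = O(1)$, so I aim for a string with $z = \sigma^{O(1)} = \operatorname{polylog}(n)$ greedy phrases, each forced to cost the full $\Theta(\log n)$ bits in the greedy encoding. The essential challenge is to beat the lower bound $\Omega(\log n/(\log\log_\sigma z + \log\sigma))$ of Theorem~\ref{ExampleTheorem}, whose denominator contains the spurious $\log\sigma = \Theta(\log\log n)$ term coming from the ``new character'' that the Gray-code construction must spell out in each phrase. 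The Steiner system $S(2,k,k^2)$ will be the lever that removes this term, thanks to the fact that any two of its $k^2$ points determine the unique block containing them.

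I would work with $k = \Theta(\sqrt{\log n})$, so that the affine plane of order $k$ provides $k^2 = \Theta(\log n)$ points (the main alphabet) and $N = k(k{+}1) = \Theta(\log n)$ blocks of size $k$. The string is built in three parts: a \emph{dictionary} $D$ listing every block $B_j$ as a length-$k$ word, suitably separated by a reserved symbol; a long \emph{padding} of a reserved separator of length $\ell$; and a \emph{test string} $T$ that reproduces the blocks in a carefully chosen order $B_{j_1}, B_{j_2}, \ldots$ designed so that consecutive blocks share a known pivot point (such a traversal exists because every point lies in $k{+}1$ blocks, giving ample flexibility). The padding length $\ell$ is chosen so that $|D|+\ell+|T|=n$; since $|D|,|T|=\operatorname{polylog}(n)$, $\ell = \Theta(n)$, and every reference from $T$ back into $D$ costs $\Omega(\log n)$ bits.

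For the greedy bound, I would argue that the pivot-sharing layout of $T$ is ``thin enough'' that the longest earlier match starting at each block boundary in $T$ reaches exactly to the end of the current block and only via the unique occurrence of that block inside $D$. Consequently greedy emits $\Theta(\log n)$ phrases, each contributing $\Omega(\log n)$ bits, yielding $\mathsf{LZ_{gr}} = \Omega(\log^2 n)$. For the optimal bound, I would exploit the Steiner property directly: with blocks ordered so that $B_{j_i}$ and $B_{j_{i+1}}$ share the pivot $p_i$, each block after the first can be split at $p_i$ into two short phrases whose $(|f|{-}1)$-prefixes match earlier occurrences only a constant number of letters back inside $T$, and whose last letters are themselves pivot points already present at a constant distance---so every triple $\langle d,\ell,c\rangle$ in this alternative parse has $d,\ell,c = O(1)$ and contributes only $O(1)$ bits. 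Summing over the $\Theta(\log n)$ blocks gives $\mathsf{LZ_{opt}} = O(\log n)$, and the ratio $\Omega(\log n)$ follows.

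The main obstacle is to verify \emph{simultaneously} that (i) the ordering of blocks and the fine-grained layout of $T$ really do force greedy to reach back into $D$ for every block---no shortcut match should appear locally within $T$---and (ii) the Steiner property really does yield $O(1)$-bit phrases for the alternative parse, without any hidden $\log\log n$ leak through an unavoidable ``new letter'' or a longer distance. The latter is the precise reason for choosing $S(2,k,k^2)$: since two consecutive points already determine the next block, the entire remainder of that block becomes referential, which is exactly the gain over the Gray-code construction of Theorem~\ref{ExampleTheorem}. Once the specific affine-plane layout is pinned down and the two parses are verified, the residual bookkeeping---padding to exactly length $n$ and appending any unused alphabet letters at the end, as in the proof of Theorem~\ref{ExampleTheorem}---is routine.
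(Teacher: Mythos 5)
Your high-level target is right, and your diagnosis of where Theorem~\ref{ExampleTheorem} loses a factor of $\log\sigma$ matches the paper's own motivation, but the single-level construction you propose cannot reach the ratio $\Omega(\log n)$, for two independent reasons. First, a counting obstruction. In any LZ77 encoding, the first occurrence of each letter must be the final character $c$ of some phrase (its $(|f|{-}1)$-prefix cannot contain a brand-new letter), so a string using all $\sigma=\Theta(\log n)$ letters forces $\mathsf{LZ_{opt}}=\Omega(\sigma\log\sigma)=\Omega(\log n\log\log n)$. Your greedy lower bound, however, comes from only $N=\Theta(\log n)$ expensive phrases of $O(\log n)$ bits each, and the remaining parts ($D$ and the padding) contribute only lower-order terms, so $\mathsf{LZ_{gr}}=O(\log^2 n)$. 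The ratio is therefore capped at $O(\log n/\log\log n)$ no matter how cleverly the optimal parse is arranged. The paper escapes this by making the number of expensive greedy phrases $\binom{\sigma}{2}=\Theta(\sigma^2)$ --- one per \emph{pair} of letters, not one per block --- so that the unavoidable $\sigma\log\sigma$ term is dominated by the $O(\sigma^2)$ budget for the optimal encoding.

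Second, the key step of your optimal parse is not an operation LZ77 can perform. The Steiner property ``two points determine the block'' is combinatorial knowledge; an LZ77 phrase can only copy a \emph{literal} earlier occurrence of its $(|f|{-}1)$-prefix. Two distinct blocks of $S(2,k,k^2)$ share at most one point, so the $k-1$ non-pivot letters of $B_{j_{i+1}}$ do not occur in $B_{j_i}$; the string you want to reference does not occur ``a constant number of letters back,'' only in $D$ (distance $\Theta(n)$, cost $\Theta(\log n)$) or scattered across $T$ (cost $\Theta(\log\log n)$ per letter). Moreover, splitting a block at the pivot yields phrases of length $\Theta(k)=\Theta(\sqrt{\log n})$ whose last letters have codes up to $\Theta(\log n)$, so neither $\ell$ nor $c$ is $O(1)$. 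This is precisely the ``$\log\log n$ leak'' you flagged, and a single Steiner system does not remove it. The paper's proof removes it with a \emph{recursive} hierarchy of Steiner systems $S(2,2^{2^{i-1}},2^{2^i})$: each block $B$ at each level carries its own local dictionary $q(B)$, whose letters are fetched from the parent's dictionary at distance $O(L(i))$, and the telescoping sum $\sum_k 2^k/(2^{2^k}-1)=O(1)$ brings the total delivery cost down to $O(\sigma^2)$, i.e., amortized $O(1)$ bits per pair. Without such multi-level infrastructure the optimal encoding stalls at $\Theta(\sigma^2\log\sigma)$, which is exactly the failure mode of the paper's own preliminary single-level example.
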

\begin{proof}
Let us first discuss a high-level idea of our construction. Consider the following string:
$$
t\cdot b_1cb'_1\cdot b_2cb'_2\cdots b_kcb'_k\cdot c^{\Theta(n)}\cdot td\cdot b_1cb'_1d\cdot b_2cb'_2d\cdots b_kcb'_kd,
$$
where $t = a_1a_2\cdots a_{\sigma-2}$ is a string consisting of $\sigma{-}2$ distinct letters, the sets $\{b_i, b'_i\}$ run through all $k = {\sigma{-}2 \choose 2}$ two-element subsets of the set $\{a_1, a_2, \ldots, a_{\sigma-2}\}$, and $c$ and $d$ are two special letters with constant codes (say, $0$ and $1$) that do not occur in $t$. The greedy LZ77 parser parses the suffix $b_1cb'_1d\cdot b_2cb'_2d\cdots b_kcb'_kd$ into phrases $b_icb'_id$ encoded by references to the substrings $b_icb'_i$ of the prefix $t\cdot b_1cb'_1\cdot b_2cb'_2\cdots b_kcb'_k$. Each such reference takes $\Omega(\log n)$ bits and, therefore, the greedy encoding occupies $\Omega({\sigma{-}2 \choose 2}\log n) = \Omega(\sigma^2\log n)$ bits.

Obviously, any LZ77 encoding spends $\Theta(\log n)$ bits to encode the substring $c^{\Theta(n)}$. If we were able to encode the prefix and the suffix surrounding the substring $c^{\Theta(n)}$ in $O(\sigma^2)$ bits, then we would obtain $\frac{\mathsf{LZ_{gr}}}{\mathsf{LZ_{opt}}} \ge \Omega(\frac{\sigma^2\log n}{\sigma^2 + \log n}) = \Omega(\frac{\sigma^2\log n}{\max\{\sigma^2, \log n\}}) = \Omega(\min\{\log n, \sigma^2\})$, which is $\Omega(\log n)$ for $\sigma = \Omega(\sqrt{\log n})$.
Unfortunately, it seems that the best encoding that one can find for the suffix $b_1cb'_1d\cdot b_2cb'_2d\cdots b_kcb'_kd$ parses each substring $b_icb'_id$ into two phrases $b_ic$ and $b'_ic$, encoding each of them by a reference to a letter in $t = a_1a_2\cdots a_{\sigma-2}$, thus spending $\Theta({\sigma{-}2 \choose 2}\log {\sigma{-}2 \choose 2}) = \Theta(\sigma^2\log\sigma)$ bits for the whole suffix, which is larger than $\Theta(\sigma^2)$ by the factor $\log\sigma$. To address this issue, we construct a more sophisticated string equipped with additional ``infrastructure'' that helps to ``deliver'' cheaply letters from a ``dictionary'' substring (like $t$) to the places where these letters are used. Let us formalize this intuition.

Choose the minimal positive integer $x$ such that $2^{2^x} > \sqrt{\log n}$. The alphabet for our example will consist of two special letters $c$ and $d$ with codes $0$ and $1$, and of the set $A$ of $2^{2^x}$ letters with codes larger than $1$. Obviously, the alphabet size $\sigma = 2^{2^x} + 2$ is at most $\log n + 2$.

Let us assign to each subset $S$ of $A$ such that $|S| = 2^{2^i}$, for some $i \in [1..x]$, a Steiner system $S(2, 2^{2^{i-1}}, 2^{2^i})$ with the set of blocks denoted by $B_S$. Denote by $q$ a mapping that maps every such $S$ to a string $q(S) = a_{j_1}da_{j_2}d\cdots a_{j_{|S|}}d$, where $a_{j_1}, a_{j_2}, \ldots, a_{j_{|S|}}$ are all letters from $S$ in an arbitrarily chosen order. The basic building elements for our string are defined recursively as follows.
$$
\begin{array}{l}
r(S) = q(S)\prod_{B \in B_S} r(B) \quad\text{ if }|S| > 2,\\
r(S) = bcb'cbcb'dd \quad\text{ if }S = \{b, b'\}\text{ for distinct letters }b, b'.
\end{array}
$$
Analogously, we define:
$$
\begin{array}{l}
r'(S) = q(S)\prod_{B \in B_S} r(B) \quad\text{ if }|S| > 2,\\
r'(S) = bcb'cbcb'dc \quad\text{ if }S = \{b, b'\}\text{ for distinct letters }b, b'.
\end{array}
$$
To break ties on the lowest levels of recursion where $|S| = 2$, we assume that $b$ is the letter from $S$ with the smallest code.

Our string on which $\frac{\mathsf{LZ_{gr}}}{\mathsf{LZ_{opt}}} \ge \Omega(\log n)$ is $s = r'(A)c^\ell r(A)$, where $\ell$ is chosen so that $\ell = \Theta(n)$ (see the text below, where we discuss the lengths of $r(S)$ and $r'(S)$). Let us first show that the greedy LZ77 encoding of this string has size $\Omega(\sigma^2\log n)$ bits.

By the definition of Steiner systems, for any subset $S\subseteq A$ of size $2^{2^i}$, each pair $\{b, b'\}$ of distinct letters from $S$ is contained in exactly one block (of size $2^{2^{i-1}}$) from $B_S$. Then, it is straightforward that any given pair $\{b, b'\}$ of distinct letters from $A$ occurs exactly once as a parameter of $r$ on the lowest level of the recursion $r(A)$. An analogous claim holds for $r'(A)$. Hence, the string $bcb'cbcb'd$ (we assume that the code of $b$ is smaller than the code of $b'$) occurs in $s$ exactly twice: in the prefix $r'(A)$ and in the suffix $r(A)$. Further, it is easy to see that the string $bcb'$ occurs in $s$ only as a substring of $bcb'cbcb'd$. 
By a straightforward case analysis, one can show that this implies that the greedy LZ77 parsing of $s$ has a phrase $f$ containing the substring $bcb'dd$ of $r(A)$: $f$ either is a phrase starting at one of the first five positions of $bcb'cbcb'dd$ (greedily ``eating'' the remaining part) or is a phrase containing the prefix $bcb'cb$ of $bcb'cbcb'dd$ (the part $bcb'c$ can be copied only from $bcb'cbcb'dc$ in $r'(A)$ and, thus, again $f$ greedily ``eats'' the remaining part). The encoding of $f$ copies the part $bcb'd$ from the substring $bcb'd$ of $r'(A)$ by reference, thus spending $\Omega(\log\ell) = \Omega(\log n)$ bits. Since the two occurrences of $bcb'cbcb'd$ in $s$ are followed by distinct letters ($c$ in $r'(A)$ and $d$ in $r(A)$), the string $bcb'dd$ must be a suffix of $f$. Hence, there is a one-to-one correspondence between the pairs $\{b,b'\}$ of distinct letters from $A$ and the phrases containing the substrings $bcb'dd$. Therefore, the greedy LZ77 encoding of $s$ occupies $\Omega({|A| \choose 2}\log n) = \Omega(\sigma^2\log n)$ bits.

Now it remains to show that there is an LZ77 encoding of the string $s$ that occupies $O(\sigma^2 + \log n)$ bits. This will imply that $\frac{\mathsf{LZ_{gr}}}{\mathsf{LZ_{opt}}} \ge \frac{\Omega(\sigma^2\log n)}{O(\sigma^2 + \log n)} \ge \Omega(\min\{\log n, \sigma^2\})$, which is $\Omega(\log n)$ since, by construction, $\sigma > \sqrt{\log n}$.

We decompose the substring $c^\ell$ of $s = r'(A)c^\ell r(A)$ into two phrases $c$ and $c^{\ell-1}$, encoding these phrases in $O(\log n)$ bits. All other phrases in our parsing will have length either one or two. For simplicity of the exposition, we consider only encoding of the suffix $r(A)$; the encoding for $r'(A)$ is analogous and occupies asymptotically the same space.

By definition, $q(A)$ is a prefix of $r(A)$. The string $q(A)$ serves as a ``dictionary'' of letters similar to the string $t$ in the preliminary example. We encode each letter of $q(A)$ as a phrase of length one, thus spending $O(\sigma\log\sigma)$ bits. These are the only ``heavy'' phrases of length one in our encoding of $r(A)$: all other phrases of length one will be either $c$ or $d$, the letters with codes $0$ and $1$, which can be encoded in $O(1)$ bits. All phrases of length two will have the form either $ac$ or $ad$, where $a \in A$; thus, the ``heavy'' part of the encoding of such phrases of length two is an \mbox{$O(\log\delta)$-bit} encoding of the distance $\delta$ to an occurrence of $a$ preceding this phrase.

Let us consider a substring $r(S) = q(S)\prod_{B \in B_S} r(B)$ of $r(A)$, where $S \subseteq A$ is a set of size $2^{2^i}$ that occurs in the expansion of the recursion $r(A)$. Suppose that $i > 1$. Then, each substring $r(B)$, for $B \in B_S$, has a prefix $q(B) = a_1da_2d\cdots a_{|B|}d$, where $a_1, a_2, \ldots, a_{|B|}$ are members of $B$. We parse $q(B)$ into phrases $a_1d, a_2d, \ldots, a_{|B|}d$, encoding each phrase $a_id$ by a reference to the letter $a_i$ of the prefix $q(S)$ of $r(S)$. Suppose that $i = 1$. Then, each block $B \in B_S$ is just a pair $\{b,b'\}$ of distinct letters from $S$, and $r(B) = bcb'cbcb'dd$. We parse $r(B)$ into phrases $bc, b'c, bc, b'd, d$, encoding each phrase of length two by a reference to a letter from the prefix $q(S)$ of the string $r(S)$.

Denote by $E(i)$ the maximum size in bits of the encoding for the suffix $\prod_{B \in B_S} r(B)$ of some string $r(S)$, among all subsets $S \subseteq A$ such that $|S| = 2^{2^i}$. Then, $E(i)$ can be expressed by the following recursion (recall that $|B_S| = {2^{2^i} \choose 2} / {2^{2^{i-1}} \choose 2}$):
$$
\begin{array}{l}
E(i) \le \left({2^{2^i} \choose 2} / {2^{2^{i-1}} \choose 2}\right) (2^{2^{i-1}} \alpha\log L(i) + E(i-1)),\quad\text{ for }i > 1,\\
E(1) \le {4 \choose 2} (4\alpha\log L(1) + \alpha),
\end{array}
$$
where $L(i)$ denotes the length of the string $r(S)$ (obviously, $L$ depends only on the size $2^{2^i}$ of $S$) and $\alpha$ is a positive constant that depends on the chosen phrase encoder. Consider the prefix $q(B)$ of a substring $r(B)$ of $r(S)$, where $B \in B_S$ and $|B| > 2$. Each phrase $ad$ from the parsing of $q(B)$ is encoded in $O(\log\delta)$ bits, where $\delta$ is the distance to the letter $a$ from the prefix $q(S)$ of $r(S)$. Obviously, we have $\delta < L(i)$. Therefore, choosing an appropriate constant $\alpha > 0$, we can estimate the number of bits required to encode all $2^{2^{i-1}}$ phrases from the parsing of $q(B)$ as $2^{2^{i-1}} \alpha\log L(i)$; hence, the expression for $E(i)$ with $i \ne 1$. Analogously, the size in bits of the encoding for $bcb'cbcb'dd$ can be estimated as $4\alpha\log L(1) + \alpha$; hence, the expression for $E(1)$.

Thus, the whole encoding of the string $s$ requires $O(\log n + \sigma\log\sigma + E(x))$ bits. It remains to show that $E(x) \le O(\sigma^2)$. Before finding a closed form for $E(i)$, let us consider $L(i)$, which can be expressed by the following recursion:
$$
\begin{array}{l}
L(i) = 2\cdot 2^{2^i} + \left({2^{2^i} \choose 2} / {2^{2^{i-1}} \choose 2}\right) L(i-1),\quad\text{ for }i > 0,\\
L(0) = 9.
\end{array}
$$
Here, $L(0) = |bcb'cbcb'dd| = 9$. Let us find a closed form for $L(i)$. Note that $2^{2^z} / {2^{2^z} \choose 2} = \frac{2}{2^{2^z} - 1}$ for any integer $z \ge 0$. Expanding the recursion for $L(i)$, we obtain:
$$
\begin{array}{l}
L(i) = 2\cdot 2^{2^i} + \frac{{2^{2^i} \choose 2}}{{2^{2^{i-1}} \choose 2}} L(i-1)\\
= 2^{2^i+1} + \frac{{2^{2^i} \choose 2}}{{2^{2^{i-1}} \choose 2}} \left(2\cdot 2^{2^{i-1}} + \frac{{2^{2^{i-1}} \choose 2}}{{2^{2^{i-2}} \choose 2}} L(i - 2)\right)\\
 = 2^{2^i+1} + \frac{4\cdot {2^{2^i} \choose 2}}{2^{2^{i-1}} - 1} + \frac{{2^{2^i} \choose 2}}{{2^{2^{i-2}} \choose 2}} L(i-2)\\
 = 2^{2^i+1} + \left(\frac{4\cdot {2^{2^i} \choose 2}}{2^{2^{i-1}} - 1} + \frac{4\cdot {2^{2^i} \choose 2}}{2^{2^{i-2}} - 1} + \cdots + \frac{4\cdot {2^{2^i} \choose 2}}{2^{2^{1}} - 1}\right) + 9\cdot {2^{2^i} \choose 2}\\
 = 2^{2^i+1} + {2^{2^i} \choose 2}\left(\frac{4}{2^{2^{i-1}} - 1} + \frac{4}{2^{2^{i-2}} - 1} + \cdots + \frac{4}{2^{2^{1}} - 1} + 9\right).
\end{array}
$$
The term $9\cdot {2^{2^i} \choose 2}$ appears because of the last level of the recursion $L(i)$. Now it is easy to see that $L(i) \le \beta\cdot {2^{2^i} \choose 2}$ for a constant $\beta > 0$. In particular, we obtain $|r(A)| = |r'(A)| = L(x) \le \beta\cdot{2^{2^x} \choose 2} \le O(\sigma^2)$ (recall that $\sigma = 2^{2^x} + 2$). Since, as it was noted above, $\sigma \le \log n + 2$, we obtain $L(x) \le O(\log^2 n)$. Hence, for large enough $n$, we have $\ell = n - |r(A)| - |r'(A)| = n - 2L(x) = n - O(\log^2 n) \ge \tfrac{1}2 n$, i.e., $\ell = \Theta(n)$, as it was announced above. Let us similarly estimate $E(i)$. Denote $\gamma_i = \alpha\log L(i)$ for brevity.
$$
\begin{array}{l}
E(i) \le \frac{{2^{2^i} \choose 2}}{{2^{2^{i-1}} \choose 2}} (2^{2^{i-1}} \gamma_i + E(i-1))\\
 = \frac{2\cdot{2^{2^i} \choose 2}}{2^{2^{i-1}} - 1}\gamma_i + \frac{{2^{2^i} \choose 2}}{{2^{2^{i-1}} \choose 2}} E(i-1)\\
 = \frac{2\cdot{2^{2^i} \choose 2}}{2^{2^{i-1}} - 1}\gamma_i + \frac{{2^{2^i} \choose 2}}{{2^{2^{i-1}} \choose 2}}\left(\frac{{2^{2^{i-1}} \choose 2}}{{2^{2^{i-2}} \choose 2}} (2^{2^{i-2}} \gamma_{i-1} + E(i-2))\right)\\
 = \frac{2\cdot{2^{2^i} \choose 2}}{2^{2^{i-1}} - 1}\gamma_i + \frac{{2^{2^i} \choose 2}}{{2^{2^{i-2}} \choose 2}} 2^{2^{i-2}} \gamma_{i-1} + \frac{{2^{2^i} \choose 2}}{{2^{2^{i-2}} \choose 2}} E(i-2) \\
 = \frac{2\cdot{2^{2^i} \choose 2}}{2^{2^{i-1}} - 1}\gamma_i + \frac{2\cdot{2^{2^i} \choose 2}}{2^{2^{i-2}} - 1} \gamma_{i-1} + \frac{{2^{2^i} \choose 2}}{{2^{2^{i-2}} \choose 2}} E(i-2)\\
 = \frac{2\cdot{2^{2^i} \choose 2}}{2^{2^{i-1}} - 1}\gamma_i + \frac{2\cdot{2^{2^i} \choose 2}}{2^{2^{i-2}} - 1} \gamma_{i-1} + \cdots + \frac{2\cdot{2^{2^i} \choose 2}}{2^{2^{1}} - 1} \gamma_{2} + {2^{2^i} \choose 2}(4\gamma_1{+}\alpha)\\
 = 2\cdot{2^{2^i} \choose 2}\left(\frac{\gamma_i}{2^{2^{i-1}} - 1} + \frac{\gamma_{i-1}}{2^{2^{i-2}} - 1} + \cdots + \frac{\gamma_2}{2^{2^{1}} - 1} + 2\gamma_1 + \frac{\alpha}2\right).
\end{array}
$$
The term ${2^{2^i} \choose 2}(4\gamma_1 + \alpha)$ appear because of the last level of the recursion $E(i)$. Note that $\gamma_i = \alpha\log L(i) \le \alpha\log(\beta\cdot{2^{2^i} \choose 2}) = O(2^{i})$. It is well known that $\sum_{k=0}^\infty \frac{2^k}{2^{2^k} - 1} = O(1)$. Therefore, $E(i)$ can be estimated as $O({2^{2^i} \choose 2})$. Thus, we obtain $E(x) \le O({2^{2^x} \choose 2})$, which is $O(\sigma^2)$ since $\sigma = 2^{2^x} + 2$.
\end{proof}

\begin{remark}
For nonclassical LZ77 encodings, we can use exactly the same example as in the proof of Theorem~\ref{ExampleTheorem2}. In this case, the substrings $q(B) = a_1da_2d\cdots a_{|B|}d$ and $bcb'cbcb'dd$ of each string $r(S)$ are parsed into one-letter phrases: the phrases $c$ and $d$ are encoded in $O(1)$ bits using the codes of these letters, and the phrases $a_1, a_2, \ldots, a_{|B|}, b, b'$ are encoded using references to letters of the prefix $q(S)$ of $r(S)$. The analysis of the size of thus obtained encoding is analogous.
\end{remark}

\section{Concluding Remarks}\label{SectConclusion}

The upper and lower bounds $O(\min\{z, \frac{\log n}{\log\log_\sigma z}\})$ and $\Omega(\min\{z, \frac{\log n}{\log\log_\sigma z + \log\sigma}\})$, established in Theorems~\ref{MainTheorem} and~\ref{ExampleTheorem}, completely solve the problem for the case of constant alphabets and for some cases of arbitrary alphabets. But the general case of arbitrary alphabets with bounds expressed in terms of the parameters $n, z, \sigma$ remains open (see Table~\ref{tbl:results} in the introduction). Note that the examples constructed in the proof of Theorem~\ref{ExampleTheorem2} to show that $\frac{\mathsf{LZ_{gr}}}{\mathsf{LZ_{opt}}} \ge \Omega(\log n)$ are extremely compressible strings with $z = O(\log^2 n)$ and it is not clear whether the upper bound $\frac{\mathsf{LZ_{gr}}}{\mathsf{LZ_{opt}}} \le O(\log n)$ remains tight if we consider ``non-extremely compressible'' strings (but not necessarily on polylogarithmic alphabets).


It is interesting to consider other encoders for LZ77. Many practical compressors utilize a type of phrase encoders that is strikingly different from ours: such encoders use entropy compression as a component. DEFLATE and LZMA are important examples of compression schemes using such techniques. This is a major open problem to formalize these schemes and to conduct a similar theoretical analysis of the efficiency of the popular greedy approach.

\bibliography{lz77-enc}
\end{document}